\newtheorem{theorem}{Theorem}
\newtheorem{lemma}{Lemma}
\renewcommand{\thefootnote}{\fnsymbol{footnote}}
\def\calT{\mathcal{T}}
\def\calO{\mathcal{O}}
\def\mcenter{{\sf MatCenter}}
\def\rmcenter{{\sf Robust-MatCenter}}
\def \multiknap{{\sf KnapCenter}\xspace}
\def \rknap{{\sf Robust-KnapCenter}\xspace}
\def \groupknap{group multi-knapsack\xspace}
\newcommand{\commentout}[1]{}
\newcommand{\eat}[1]{}
\newcommand{\calH}{{\mathcal H}}
\newcommand{\calS}{{\mathcal S}}
\newcommand{\calI}{{\mathcal I}}
\newcommand{\calM}{{\mathcal M}}
\newcommand{\dist}{d}
\newcommand{\opt}{\mathsf{OPT}}
\newcommand{\B}{\mathsf{B}}
\newcommand{\E}{\mathsf{E}}
\newcommand{\I}{S'}
\renewcommand{\O}{\mathsf{O}}
\renewcommand{\P}{{\rm P}\xspace}
\newcommand{\NP}{{\rm NP}\xspace}
\newcommand{\kb}{{\mathcal B}}
\newcommand{\V}{V}
\newcommand{\true}{$\mathrm{TRUE}$}
\begin{document}

\title{Matroid and Knapsack Center Problems\footnotemark[1]}
\author{Danny Z.~Chen\footnotemark[2] \and Jian Li\footnotemark[3] \and Hongyu Liang\footnotemark[4]
\and Haitao Wang\footnotemark[5]}

\renewcommand{\thefootnote}{\fnsymbol{footnote}}

\footnotetext[1]{This work was supported in part by the National
Basic Research Program of China Grant 2011CBA00300, 2011CBA00301,
and the National Natural Science Foundation of China Grant 61033001,
61061130540, 61073174, 61202009. The research of D.Z. Chen was supported in part by NSF under
Grants CCF-0916606 and CCF-1217906.}

\footnotetext[2]{Department of Computer Science and Engineering, University of Notre Dame, Notre Dame, IN 46556, USA.
  E-mail: dchen@cse.nd.edu}

\footnotetext[3]{Institute for Interdisciplinary Information
Sciences, Tsinghua University, Beijing, 100084, China. E-mail:
lijian83@mail.tsinghua.edu.cn}

\footnotetext[4]{Institute for Interdisciplinary Information
Sciences, Tsinghua University, Beijing, 100084, China. E-mail:
lianghy08@mails.tsinghua.edu.cn}

\footnotetext[5]{Department of Computer Science, Utah State University, Logan, UT 84322, USA.
E-mail: haitao.wang@usu.edu}

\date{}
\maketitle


\begin{abstract}
In the classic $k$-center problem, we are given a metric graph,
and the objective is to
select $k$ nodes as centers such that the maximum distance from any vertex
to its closest center is minimized.
In this paper, we consider two important
generalizations of $k$-center, the matroid center problem
and the knapsack center problem.
Both problems are motivated by recent
content distribution network applications.
Our contributions can be summarized as follows:
\begin{enumerate}
\item We consider the matroid center problem in which
the centers are required to form an independent set of a given matroid.
We show this problem is NP-hard even on a line. We present
a 3-approximation algorithm for the problem on general metrics.
We also consider the outlier version of the problem where
a given number of vertices can be excluded as outliers from the solution.
We present a 7-approximation for the outlier version.
\item We consider the (multi-)knapsack center problem in which
the centers are required to satisfy one (or more) knapsack constraint(s).
It is known that the knapsack center problem with a single knapsack constraint
admits a 3-approximation.
However, when there are at least two knapsack constraints,
we show this problem is not approximable at all.
To complement the hardness result, we present a polynomial time algorithm that gives
a 3-approximate solution such that one knapsack constraint is satisfied and the others
may be violated by at most a factor of $1+\epsilon$.
We also obtain a 3-approximation for the outlier version
that may violate the knapsack constraint by $1+\epsilon$.
\end{enumerate}
\end{abstract}

\newpage

\section{Introduction}
\label{sec:intro}

The $k$-center problem is a fundamental
facility location problem. In the basic version, we are given a metric space $(V,d)$
and are asked to locate a set $\calS \subseteq V$ of at most $k$ vertices as centers and
to assign the other vertices to the centers,
so as to minimize the maximum distance from any vertex to its assigned center, or more formally, to minimize
$\max_{v\in V} \min_{u\in \calS} d(v,u)$.
In the {\em demand} version of the $k$-center problem,
each vertex $v$ has a positive demand $r(v)$, and
our goal is to minimize the maximum weighted distance from any vertex to the centers,
i.e., $\max_{v\in V} \min_{u\in S} r(v)d(v,u)$.
It is well known that the $k$-center problem is NP-hard and admits a polynomial time 2-approximation
even for the demand version \cite{gonzalez1985clustering,hochbaum1985best}, and that no polynomial time $(2-\epsilon)$-approximation algorithm exists unless $\P = \NP$ \cite{gonzalez1985clustering}.

In this paper, we conduct a systematic study on two generalizations of the $k$-center problem and their variants.
The first one is the {\em matroid center} problem, denoted by \mcenter,
which is almost the same as the $k$-center problem except that, instead of the cardinality constraint on the set of centers, now the centers are required to form an independent set of
a given matroid.
A finite matroid $\calM$ is a pair $(V,\calI)$, where $V$ is a finite set
(called the {\it ground set}) and $\calI$ is
a collection of subsets of $V$.
Each element in $\calI$ is called an {\em independent set}.
Moreover, ${\cal M}=(V,\calI)$ satisfies the following three properties: (1)
$\emptyset\in \calI$; (2) if $A\subseteq B$ and $B\in \calI$, then $A\in \calI$;
(3) for all $A, B \in \calI$ with $|A|>|B|$,
there exists an element $e\in A \setminus B$
such that $B\cup \{e\} \in \calI$.
Following the conventions in the literature, we assume the matroid $\calM$ is given by an independence oracle which,
given a subset $S\subseteq V$, decides whether $S\in \calI$.
For more information about the theory of matroids, see, e.g., \cite{schrijver03}.

The second problem we study is the {\em knapsack center} problem (denoted as \multiknap),
another generalization of $k$-center in which the chosen centers are subject to (one or more) knapsack constraints.
More formally, in \multiknap,  there are $m$ nonnegative weight functions
$w_1,\ldots, w_m$ on $V$, and $m$ weight budgets
$\kb_1,\ldots,\kb_m$. Let $w_i(\V'):=\sum_{v\in \V'}w_i(v)$
for all $\V' \subseteq \V$. A solution takes a
set of vertices $\calS \subseteq \V$ as centers such that
$w_i(\calS) \leq \kb_i$ for all $1\leq i\leq m$.
The objective is still to
minimize the maximum service cost of any vertex in $V$ (the service cost of $v$ equals $\min_{c\in \calS}d(v,c)$, or $\min_{c\in\calS}r(v)d(v,c)$ in the demand version).
In this paper, we are interested only in the case where the number $m$ of knapsack constraints is a constant.
We note that the special case with only one knapsack constraint was studied in \cite{jacm86} under the name of weighted $k$-center, which already generalizes the basic $k$-center problem.

Both \mcenter\ and \multiknap\ are motivated by important applications in
content distribution networks \cite{hajiaghayi2011budgeted,krishnaswamy2011matroid}.
In a content distribution network,
there are several types of servers and a set of clients to be connected to the servers.
Often there is a budget constraint on the number of deployed servers of each type \cite{hajiaghayi2011budgeted}.
We would like to deploy a set of servers subject to these budget constraints in order to minimize the maximum
service cost of any client. The budget constraints correspond to finding an independent set in
a partition matroid.\footnote{
Let $B_1,B_2,\ldots, B_b$ be a collection of disjoint subsets of $V$ and $d_i$ be integers such that $1\leq d_i\leq |B_i|$ for all
$1\leq i\leq b$. We say a set $I\subseteq V$ is independent if $|I\cap B_i|\leq d_i$ for $1\leq i\leq b$.
All such independent sets form a partition matroid.
}
We can also use a set of knapsack constraints to capture the budget constraints for all types
(we need one knapsack constraint for each type).
Motivated by such applications,
Hajiaghayi et al.~\cite{hajiaghayi2011budgeted} first studied the red-blue
median problem in which there are
 two types (red and blue) of facilities, and the goal is to deploy at most
 $k_r$ red facilities and $k_b$ blue facilities so as to minimize the sum of service costs.
Subsequently, Krishnaswamy et al. \cite{krishnaswamy2011matroid}
introduced a more general {\em matroid median} problem which seeks to select
a set of facilities that is an independent set in a given matroid
and the {\em knapsack median} problem
in which the set of facilities must satisfy a knapsack constraint.
The work mentioned above uses the sum of service costs as the objective (the $k$-median objective),
while our work aims to minimize the maximum services cost (the $k$-center objective), which is
another popular objective in the clustering and network design literature.

\subsection{Our Results}
For \mcenter,
we show the problem is \NP-hard to approximate within a factor of
$2-\epsilon$ for any constant $\epsilon>0$, even on a line.
Note that the $k$-center problem on a line can be solved exactly
in polynomial time \cite{ref:ChenEfkCenter11}.
We present a 3-approximation algorithm for \mcenter\ on general
metrics.  This improves the constant factors implied
by the approximation algorithms for matroid
median \cite{krishnaswamy2011matroid,charikar2011dependent}
(see Section~\ref{subsec:mcenterapprox} for details).

Next, we consider the outlier version of \mcenter, denoted as \rmcenter, where
one can exclude at most $n-p$ nodes as outliers.
We obtain a 7-approximation for \rmcenter.
Our algorithm is a nontrivial generalization of
the greedy algorithm of Charikar et al. \cite{charikar2001algorithms},
which only works for the outlier version of the basic $k$-center.
However, their algorithm and analysis do not extend to our problem.
In their analysis, if at least $p$ nodes are covered by $k$ disks (with radius 3 times $\opt$),
they have found a set of $k$ centers and obtained a 3-approximation.
However, in our case, we may not be able to open enough centers
in the covered region, due to the matroid constraint. Therefore, we need to search for centers
globally. To this end, we carefully construct two matroids and argue that their intersection
provides a desirable answer
(the construction is similar to that for the non-outlier version, but more involved).

We next deal with the \multiknap problem. We show that for any
$f>0$, the existence of an $f$-approximation algorithm for
\multiknap with more than one knapsack constraint implies $\P =
\NP$. This is a sharp contrast with the case with only one knapsack
constraint, for which a 3-approximation exists \cite{jacm86} and is
known to be optimal \cite{journals/jacm/ChuzhoyGHKKKN05}. Given this
strong inapproximability result, it is then natural to ask whether
efficient approximation algorithms exist if we are allowed to
slightly violate the constraints. We answer this question
affirmatively. We provide a polynomial time algorithm that, given an
instance of \multiknap with a constant number of knapsack
constraints, finds a 3-approximate solution that is guaranteed to
satisfy one constraint and violate each of the others by at most a
factor of $1+\epsilon$ for any fixed $\epsilon>0$. This generalizes
the result of \cite{jacm86} to the multi-constraint case. Our
algorithm
also works for the demand version of the problem.

We then consider the outlier version of the knapsack center problem, which we denote by \rknap.
We present a 3-approximation algorithm for \rknap that violates the knapsack constraint by a factor of
$1+\epsilon$ for any fixed $\epsilon>0$. Our algorithm can be regarded as a ``weighted'' version of the greedy algorithm of Charikar et al. \cite{charikar2001algorithms} which only works for the unit-weight case.
However, their charging argument does not apply to the weighted case. We instead adopt a more involved algebraic approach to prove the performance guarantee.
We translate our algorithm into inequalities involving point sets, and then directly manipulate the inequalities to establish our desired approximation ratio. The total weight of our chosen centers may exceed the budget by the maximum weight of any client, which can be turned into a $1+\epsilon$ multiplicative factor by the partial enumeration technique.
We leave open the question whether there is a constant factor approximation for \rknap that satisfies the knapsack constraint.

\subsection{Related Work}

For the basic $k$-center problem, Hochbaum and Shmoys
\cite{hochbaum1985best,jacm86} and Gonzalez
\cite{gonzalez1985clustering} developed 2-approximation algorithms,
which are the best possible if P $\ne$ NP
\cite{gonzalez1985clustering}. The former algorithms are based on
the idea of the threshold method, which originates from \cite{journals/jct/EF70}. On some special
metrics like the shortest path metrics on trees, $k$-center (with or
without demands) can typically be solved in polynomial time by
dynamic programming. By exploring additional structures of the
metrics, even linear or quasi-linear time algorithms can be
obtained; see e.g.
\cite{ref:ChenEfkCenter11,ref:ColeSl87,ref:FredericksonPa91} and the
references therein. Several generalizations and variations of
$k$-center have also been studied in a variety of application
contexts; see, e.g. \cite{aggarwal2006achieving,li2010clustering,khuller1997fault,conf/sirocco/ChechikP12,conf/focs/CyganHK12,conf/approx/KhullerSS12}.

A problem closely related to $k$-center is
the well-known $k$-median problem, whose
objective is to minimize the sum of service costs of all nodes instead of the maximum one.
Hajiaghayi et al.~\cite{hajiaghayi2011budgeted} introduced the red-blue
median problem that generalizes $k$-median, and presented a constant factor approximation based on local search.
Krishnaswamy et al. \cite{krishnaswamy2011matroid}
introduced the more general matroid median problem and presented a $16$-approximation algorithm based on LP rounding,
whose ratio was improved to $9$ by Charikar and Li \cite{charikar2011dependent}
using a more careful rounding scheme.
Another generalization of $k$-median is the knapsack median problem studied by Kumar \cite{conf/soda/Kumar12}, which requires to open a set of centers with a total weight no larger
than a specified value. Kumar gave a (large) constant factor approximation for knapsack median, which was improved by Charikar and Li \cite{charikar2011dependent} to a 34-approximation. Several other classical problems have also been investigated recently under matroid or knapsack constraints, such as minimum spanning tree \cite{conf/soda/Zenklusen12}, maximum matching \cite{conf/esa/GrandoniZ10}, and submodular maximization \cite{conf/stoc/LeeMNS09,conf/stoc/VondrakCZ11}.

For the $k$-center formulation, it is well known that a few distant vertices (outliers) can disproportionately affect the final solution.
Such outliers may significantly increase the cost of the solution, without improving the level of service to the majority of clients.
To deal with outliers, Charikar et al. \cite{charikar2001algorithms} initiated the study of the robust versions of $k$-center and other related problems, in which a certain number of points can be excluded as outliers.
They gave a 3-approximation for robust $k$-center, and showed that the problem with forbidden centers
(i.e., some points cannot be centers)
is inapproximable within $3-\epsilon$ unless P = NP. For robust $k$-median, they presented a bicriteria approximation algorithm that returns a $4(1+1/\epsilon)$-approximate solution in which the number of excluded outliers may violate the upper bound by a factor of $1+\epsilon$. Later, Chen \cite{conf/soda/Chen08} gave a truly constant factor approximation (with a very large constant) for the robust $k$-median problem.
McCutchen and Khuller \cite{matthew2008streaming} and Zarrabi-Zadeh and
Mukhopadhyay \cite{zarrabi2009streaming} considered the robust $k$-center problem in a streaming context.

\section{The Matroid Center Problem}\label{sec:mcenter}
In this section, we consider the matroid center problem and its outlier version.
A useful ingredient of our algorithms is the {\em (weighted) matroid intersection} problem defined as follows.
We are given two matroids $\calM_1(V,\calI_1)$ and $\calM_2(V, \calI_2)$ defined on the same ground set $V$.
Each element $v\in V$ has a weight $w(v)\geq 0$.
The goal is to find a common independent set $S$ in the
two matroids, i.e., $S\in \calI_1\cap \calI_2$, such that
the total weight $w(S)=\sum_{v\in S}w(v)$ is maximized.
It is well known that this problem can be solved in polynomial time (e.g., see \cite{schrijver03}).

\subsection{NP-hardness of Matroid Centers on a Line}
\label{subsec:mcenternp}

In contrast to the basic $k$-center problem on a line which can be
solved in near-linear time \cite{ref:ChenEfkCenter11}, we show that \mcenter\ is NP-hard even on a line.
We actually prove the following stronger theorem.

\begin{theorem}
\label{thm:mcenternp}
It is \NP-hard to approximate \mcenter\ on a line within a factor
strictly better than 2, even when the given matroid is a partition matroid.
\end{theorem}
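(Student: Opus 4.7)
The plan is to prove NP-hardness of $(2-\epsilon)$-approximation by a gap-producing reduction from an NP-hard combinatorial problem (the most natural candidate is 3-SAT, possibly the bounded-occurrence variant; 3-Dimensional Matching would also fit) into a partition-matroid center instance on a line in which yes-instances have optimum radius $1$ and no-instances have optimum radius at least $2$. Such a factor-$2$ gap then immediately rules out a $(2-\epsilon)$-approximation under $\P \neq \NP$.

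At the gadget level, for each variable $x_i$ of the source instance I would introduce a small \emph{variable gadget} on the line consisting of candidate-center points $T_i$ and $F_i$ at distance $2$, together with a ``forcing'' demand point $v_i$ placed midway so that $d(v_i,T_i)=d(v_i,F_i)=1$. Putting $\{T_i,F_i\}$ into a single partition-matroid block of capacity $1$ forces selecting \emph{at most} one; the presence of $v_i$ together with the radius-$1$ requirement forces \emph{exactly} one. Thus a feasible solution of radius $1$ encodes a truth assignment. For each clause $C_j$ one then places a clause gadget whose demand points are within distance $1$ of precisely those candidate centers corresponding to literals satisfying $C_j$, and at distance at least $2$ from every other candidate center in the construction.

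Putting the gadgets together, I would space the constructions along the line so that centers of one gadget cannot accidentally cover demand points of another, and then prove the two directions: (i) if the source instance is a YES instance, the corresponding choice of $T_i$ or $F_i$ per variable (extended to all replicas) covers every demand point at radius $1$; (ii) if it is a NO instance, every feasible center set leaves at least one clause demand point at distance $\geq 2$ from every chosen center, so $\opt \geq 2$. Since all distances in the construction are polynomially bounded and the partition matroid is trivially given by an independence oracle, this is a polynomial-time reduction.

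The main obstacle is that line geometry is extremely restrictive: a single demand point only ``sees'' centers within an interval of length $2$ around it, so a naive clause gadget whose three literals come from variables placed at far-apart positions cannot possibly fit inside a single interval. The heart of the construction therefore needs to \emph{replicate} each variable gadget next to every clause that uses the variable, while using additional partition-matroid blocks plus ``consistency'' demand points to ensure that inconsistent truth values across replicas of the same variable leave some demand point uncovered. The delicate quantitative positioning required to guarantee that inconsistency truly creates a distance-$2$ gap (and is not accidentally patched by some unrelated center that happens to lie within distance $1$ on the line) is where the bulk of the case analysis will live, and is the step I expect to be the most technically involved.
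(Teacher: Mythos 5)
Your high-level shape (a gap reduction from bounded-occurrence 3SAT producing instances with optimum $1$ versus $\geq 2$, with a partition matroid restricting centers) matches the paper's, but your variable gadget fails on its own terms. In a center problem every point is a client, so a radius-$1$ solution must also serve $T_i$ and $F_i$ themselves. If exactly one of $\{T_i,F_i\}$ is opened, the other sits at distance $2$ from the opened one (and, by your own spacing assumption, far from every other gadget), so completeness breaks: even on YES instances your intended solutions have radius $2$. If instead $v_i$ is allowed to be a center, then opening $v_i$ alone covers all of $T_i, v_i, F_i$ at radius $1$ while committing to no truth value and consuming no matroid capacity, so soundness breaks; forbidding $v_i$ via a budget-$0$ color class just reinstates the first problem. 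The paper avoids this dilemma by encoding the binary choice differently: each group of three collinear points (endpoints at distance $1$ from the midpoint) can be covered at radius $1$ either by its midpoint or by \emph{both} endpoints, and the truth value is carried by which covering mode is used, propagated inside the gadget through a chain of color pairs $[p^i_M,q^i_M]$, $[p^i_L,p^{i1}_M]$, $[p^i_R,p^{i2}_M]$, etc.

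The consistency mechanism you sketch also cannot be completed as stated. At radius $1$ a demand point on the line is coverable only by centers within distance $1$ of it, so a ``consistency demand point'' cannot geometrically tie together replicas of a variable placed near different, far-apart clauses; all inter-gadget logic must go through the matroid. Moreover, the blocks of a partition matroid are disjoint, so $T_i$ cannot simultaneously belong to the block $\{T_i,F_i\}$ and to cross-replica consistency blocks --- you would need to duplicate points so that each participates in at most one pair. The paper's construction makes both issues moot: there is a single gadget per variable (no replication), with four ``portal'' points per literal, each portal paired by a fresh color with one of the co-located clause midpoints. Setting $x_i$ to \false\ forces the positive portals to be opened as centers, which via the color pairs forbids the clause midpoints corresponding to the literal $x_i$; an unsatisfied clause then has all three of its midpoints forbidden, and since its left/right points lie in a budget-$0$ class and groups are mutually far apart, its service radius jumps, yielding the factor-$2$ gap. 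Patching your construction along these lines amounts to re-deriving the paper's proof; the replication idea itself buys nothing and cannot be rescued by demand points alone.
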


\begin{proof}
In a partition matroid, each element in the ground set is colored using one of the $h$ colors and we are given $h$
integers $b_1,b_2,\ldots, b_h$.
The collection of all independent sets is defined to be all subsets that
contain at most $b_1$ elements of color $1$, at most $b_2$ elements of
color $2$, and so on.

\begin{figure}[h]
\begin{center}
\includegraphics[width=0.8\linewidth]{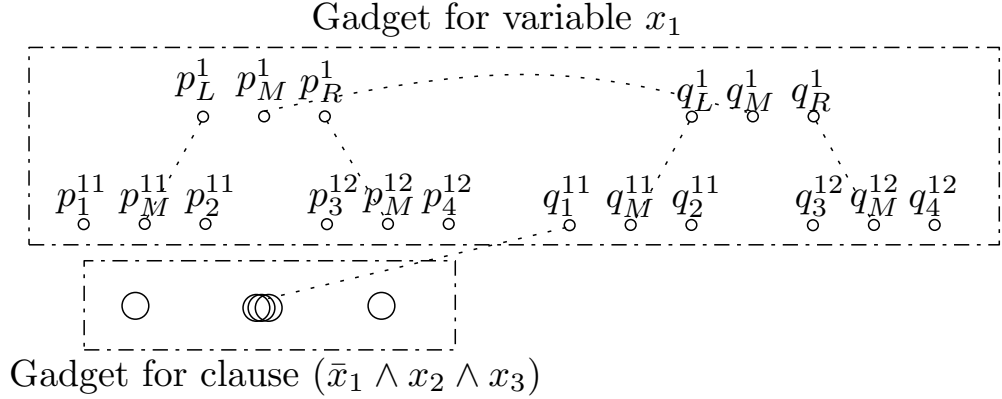}
\caption{\footnotesize A variable gadget and a clause gadget.}\label{fig:npc}
\end{center}
\end{figure}

We use the 3SAT problem for the reduction.
Without loss of generality, we assume that each literal
(including all variables $x_i$ and their negation $\bar{x}_i$)
appears exactly four times in the 3DNF.
Given a 3DNF, we create a \mcenter\ instance as follows.
The points appear in groups. Each group consists of $r$ ($r\geq 3$) points with $r-2$ points in the middle, one to the left and
one to the right. The left and right points are $1$ unit distance away from the midpoints.
Different groups are very far away from each other.
Therefore, in order to make the maximum radius at most one, we need to either select one of the midpoints
in each group or select at least the two points not in the middle.
For each variable $x_i$, we create a variable gadget as follows.
The gadget consists of 6 groups, each having 3 points:
$$
(p^i_L, p^i_M, p^i_R), (q^i_L, q^i_M, q^i_R),
(p^{i1}_{1}, p^{i1}_M, p^{i1}_{2}), (p^{i2}_{3}, p^{i2}_M, p^{i2}_{4}),
(q^{i1}_{1}, q^{i1}_M, q^{i1}_{2}), (q^{i2}_{3}, q^{i2}_M, q^{i2}_{4}).
$$
For two points $p$ and $q$,
we use $[p,q]$ to indicate that we assign a new color to $p$ and $q$.
The color assignment for the gadget is defined by the following pairs:
$$
[p^i_M, q^i_M], [p^i_L, p^{i1}_M], [p^i_R, p^{i2}_M],
 [q^i_L, q^{i1}_M], [q^i_R, q^{i2}_M].
 $$
We are allowed to choose at most one point as a center from each color class.
Points $p^{i1}_{1}, p^{i1}_{2}, p^{i2}_{3}, p^{i2}_{4}$ are called {\em positive portals} of $x_i$ and
points $q^{i1}_{1}, q^{i1}_{2}, q^{i2}_{3}, q^{i2}_{4}$ are called {\em negative portals} of $x_i$.
See Figure~\ref{fig:npc} for an example.
For each clause, we create a clause gadget, which is a group of $5$ points.
We have 3 points in the middle (co-located at the same place), each corresponding to a literal in the clause.
If the point corresponds to a positive  (negative) literal, say $x_i$ (or $\bar{x}_i$),
the point is paired with one of the positive (negative) portals of $x_i$ and we assign
the pair a new color.
We also require that at most one point can be chosen as a center in this pair.
Each portal can be paired at most once.
Since each literal appears exactly 4 times, we have enough portals for the clause gadgets.
All the left and right points of all clause gadgets have the same color
but we are allowed to choose none of them as centers.

We can show that the optimal radius for the \mcenter\ instance is $1$ if and only if the 3DNF formula is
satisfiable. First, suppose the 3DNF is satisfiable.
If $x_i$ is \true\ in a truth assignment, then
we pick $p^i_M, p^{i1}_M, p^{i2}_M$ and $p^{i1}_{1}, p^{i1}_{2}, p^{i2}_{3}, p^{i2}_{4}$ as centers.
Otherwise, we pick
$q^i_M, q^{i1}_M, q^{i2}_M$ and $q^{i1}_{1}, q^{i1}_{2}, q^{i2}_{3}, q^{i2}_{4}$ as centers.
It is straightforward to verify the independence property.
For each group, at least one of the midpoints is selected. Thus, the optimal solution is $1$.
Given the correspondence, the reverse direction can be proved similarly and we omit it.
\end{proof}

\subsection{A $3$-Approximation for \mcenter}
\label{subsec:mcenterapprox}

In fact, we can obtain
a constant approximation for \mcenter\ by using the
constant approximation for the matroid median problem
\cite{krishnaswamy2011matroid,charikar2011dependent}, which roughly gives a 9-approximation for \mcenter.
The idea is given below.

We
say a space $V$ with a distance function $d$ satisfies the
\emph{$(\lambda,c)$-relaxed triangle inequality} (TI) for some
$\lambda$ and $c$, if $d(a_0,a_c)\leq
\lambda\sum_{i=1}^{c}d(a_{i-1},a_i)$ for all $a_0,a_1,\ldots,a_c \in
V$. (Thus a metric space satisfies the $(1,c)$-relaxed TI for all
$c\geq 1$.) By examining the algorithms in \cite{krishnaswamy2011matroid,charikar2011dependent} for the matroid median problem, we notice that they can actually
give a $(\mu\lambda)$-approximation for matroid median where $\mu$ is some universal constant, if the underlying space satisfies the $(\lambda,c_0)$-relaxed
TI for some algorithm-dependent $c_0$.\footnote{We note that Golovin
et al. \cite{conf/fsttcs/GolovinGKT08} claimed (without a proof) that, in
our notations, most existing approximation algorithms for $k$-median
achieve an $O(\lambda)$-approximation on spaces satisfying
$(\lambda,2)$-relaxed TI.
By a scrutiny of the existing $k$-median algorithms, we are not able to reproduce
the same result and the correct approximation ratio should be roughly $O(\lambda^{c_0})$.
However, the results of \cite{conf/fsttcs/GolovinGKT08} are not affected in any essential way
since this only changes the constant hidden in the big-oh notation.
}
(Roughly speaking, $c_0$ is the maximum number of times that the
triangle inequality is used for bounding the distance between a client
and a facility.)
Now, given an instance of \mcenter\ with metric
space $(V, d)$, we define a new distance function $d'$ as
$d'(a,b)=(d(a,b))^p$ for all $a,b\in V$, where $p>2$ is a parameter
whose value will be specified later. By the convexity of the
function $f(x)=x^{p}$ when $p\geq 2$, for all $c\geq 1$ and
$a_0,a_1,\ldots,a_c \in V$, we have
$(\sum_{i=1}^{c}d(a_{i-1},a_i)/c)^p \leq
\sum_{i=1}^{c}d(a_{i-1},a_i)^p/c$, and thus
\begin{eqnarray*}
d'(a_0,a_c)
&=&d(a_0,a_c)^p \leq(\sum_{i=1}^{c}d(a_{i-1},a_i))^{p} \\
&\leq&
c^{p-1}\sum_{i=1}^{c}d(a_{i-1},a_i)^p
=c^{p-1}\sum_{i=1}^{c}d'(a_{i-1},a_i).
\end{eqnarray*}
Therefore $(V, d')$ satisfies the $(c^{p-1},c)$-relaxed TI for all
$c \geq 1$. In particular, it satisfies the
$(c_0^{p-1},c_0)$-relaxed TI where $c_0$ is the algorithm-dependent
parameter mentioned before. We now solve the matroid median problem
on the instance with the new distance function $d'$. Let $\opt$
denote the optimal objective value of \mcenter\ on the original
instance. Then it is clear that the optimal cost of matroid median
on the new instance is at most $|V|\cdot \opt^{p}$. By our previous
observation, the algorithms of
\cite{krishnaswamy2011matroid,charikar2011dependent} give a solution
of cost at most $\mu c_0^{p-1}|V|\opt^{p}$. Transforming the
distance function back to $d$, the maximum service cost of any
client is at most $(\mu
c_0^{p-1}|V|\opt^{p})^{1/p}=c_0^{1-1/p}(\mu|V|)^{1/p}\opt$. By
choosing $p=\Omega(|V|)$, this can produce a
$(c_0+\epsilon)$-approximation for \mcenter\ for any fixed $\epsilon>0$. Using the algorithm of
\cite{charikar2011dependent} this roughly gives a 9-approximation.

\begin{algorithm}[t]
  \caption{Algorithm for \mcenter\ on $G_i$} \label{alg:mcenter}
  Initially, $C \leftarrow \emptyset$, and mark all vertices in $V$ as uncovered.

  \While{$V$ contains uncovered vertices}
  {
  Pick an uncovered vertex $v$. Set $\B(v)\leftarrow \B(v,\dist(e_i))$ and $C\leftarrow C\cup \{v\}.$

  Mark all vertices in $\B(v, 2\dist(e_i))$ as covered.
  }

Define a partition matroid $\calM_\B=(V,\calI)$ with
partition $\{\{\B(v)\}_{v\in C}, V\setminus
\cup_{v\in C}\B(v)\}$ (note that $\{\B(v)\}_{v\in C}$ are disjoint sets by Lemma \ref{lm:disjoint}), where $\calI$ is the set of subsets of $V$ that contains at most 1 element from every $\B(v)$ and 0
element from $V\setminus \cup_{v\in C}\B(v)$.

Solve the unweighted (or, unit-weight) matroid intersection problem between $\calM_\B$
and $\calM$ to get an optimal intersection $\calS$. If
$|\calS|<|C|$, then we declare a failure and try the next $G_i$.
Otherwise, we succeed and return $\calS$ as the set of centers.
\end{algorithm}

We next present a $3$-approximation for \mcenter, thus
improving the ratio derived from the matroid median algorithms \cite{krishnaswamy2011matroid,charikar2011dependent}. Also,
compared to their LP-based algorithms, ours is
simpler, purely combinatorial, and easy to implement.
We begin with the description of our algorithm. Regard the metric space as a (complete) graph $G=(V,E)$ where each edge $\{u,v\}$ has length $d(u,v)$.
Let $\B(v, r)$ be the set of vertices that are at most $r$ unit distance away from $v$ (it depends on the underlying graph).
Let $e_1, e_2, \dots, e_{|E|}$ be
the edges in a non-decreasing order of their lengths. We
consider each
spanning subgraph $G_i$ of $G$ that contains only the first $i$ edges, i.e., $G_i=(V, E_i)$ where $E_i=\{e_1,\dots,e_i\}$.
We run
Algorithm \ref{alg:mcenter} on each $G_i$ and take the best solution.

\begin{lemma}
\label{lm:disjoint}
For any two distinct $u,v\in C$, $\B(u)$ and $\B(v)$ are disjoint sets.
\end{lemma}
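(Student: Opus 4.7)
The plan is to leverage the fact that a vertex added to $C$ must have been uncovered at the moment it was picked, and combine this with the triangle inequality on the shortest-path metric of $G_i$. Throughout, let $d_i$ denote $d(e_i)$ for brevity, and let all distances be understood as shortest-path distances in $G_i$.

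First, I would establish the key separation property: for any two distinct $u, v \in C$, we have $d_{G_i}(u,v) > 2 d_i$. To see this, suppose without loss of generality that $u$ was added to $C$ before $v$. When $u$ was processed, every vertex in $\B(u, 2 d_i)$ was marked covered. Since $v$ was picked by a later iteration of the while-loop, $v$ must still have been uncovered, so $v \notin \B(u, 2 d_i)$, which gives $d_{G_i}(u,v) > 2 d_i$.

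Next, I would derive disjointness by contradiction. Suppose there is some $w \in \B(u) \cap \B(v) = \B(u, d_i) \cap \B(v, d_i)$. Then $d_{G_i}(u,w) \le d_i$ and $d_{G_i}(v,w) \le d_i$, so by the triangle inequality in $G_i$,
\[
d_{G_i}(u,v) \;\le\; d_{G_i}(u,w) + d_{G_i}(w,v) \;\le\; 2 d_i,
\]
contradicting the strict inequality established in the previous step. Hence $\B(u) \cap \B(v) = \emptyset$.

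There is no serious obstacle here; the only subtlety to flag is that the argument relies on the covering step using radius $2d_i$ (twice the radius of $\B(v)$ itself), and on the strict inequality obtained from "uncovered" versus the non-strict inequality coming from the containment $w \in \B(\cdot, d_i)$. Both are immediate from the algorithm's definition, so the proof amounts essentially to this single triangle-inequality computation.
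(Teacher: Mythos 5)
Your proof is correct and follows essentially the same route as the paper: both derive a contradiction from a common point $w \in \B(u)\cap\B(v)$ via the triangle inequality against the fact that distinct centers in $C$ are more than $2\dist(e_i)$ apart. The only difference is that you explicitly justify that separation fact from the covering step of the algorithm, whereas the paper simply asserts it.
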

\begin{proof}
Suppose we are working on $G_i$ and
there is a node $w$ that is in both $\B(u)$ and $\B(v)$.
Then we know $\dist(w,u)\leq \dist(e_i)$ and $\dist(w,v)\leq \dist(e_i)$.
Thus, $\dist(u,v)\leq 2\dist(e_i)$. But this contradicts with
the fact that the distance between
every two nodes in $C$ must be larger than $2\dist(e_i)$.
\end{proof}

\begin{theorem}
Algorithm \ref{alg:mcenter} produces a $3$-approximation for \mcenter.
\end{theorem}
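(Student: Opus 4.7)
The plan is to show that among the $|E|$ candidate subgraphs $G_i$, the iteration with $d(e_i) = \opt$ both succeeds and returns a solution of radius at most $3\,\opt$; since the algorithm returns the best outcome over all $i$, this suffices. Let $\calS^*$ denote an optimal set of centers, of radius $\opt$, and let $i^*$ be the index with $d(e_{i^*})=\opt$ (which exists because $\opt$ is the distance between some vertex and some optimal center, hence equals the length of some edge of $G$).

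For that iteration, I would first argue that the matroid intersection step does not declare failure. The greedy loop ends with a set $C$ such that every vertex in $V$ lies within $2d(e_{i^*})$ of some $v\in C$, and, by Lemma~\ref{lm:disjoint}, the balls $\{\B(v)\}_{v\in C}$ are pairwise disjoint. Because $\opt = d(e_{i^*})$, each $v\in C$ has at least one center of $\calS^*$ within distance $d(e_{i^*})$, i.e., inside $\B(v)$. Choosing exactly one such center from each $\B(v)$ yields a set $T\subseteq \calS^*$ of size $|C|$; since $T\subseteq\calS^*$ it is independent in $\calM$, and by construction it picks at most one element from each class of the partition matroid $\calM_\B$, so it is independent in $\calM_\B$. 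Thus the optimal matroid intersection $\calS$ satisfies $|\calS|\ge|T|=|C|$, so the algorithm does not fail.

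Next I would bound the service radius. Combined with the partition-matroid constraint, $|\calS|\ge|C|$ forces $\calS$ to contain exactly one center in each $\B(v)$ (and no vertex outside $\bigcup_{v\in C}\B(v)$). For any vertex $u\in V$, the greedy loop guarantees some $v\in C$ with $d(u,v)\le 2d(e_{i^*})$; the center $c_v\in\calS\cap\B(v)$ satisfies $d(v,c_v)\le d(e_{i^*})$ by definition of $\B(v)$. By the triangle inequality,
\[
d(u,c_v)\le d(u,v)+d(v,c_v)\le 3d(e_{i^*})=3\,\opt,
\]
which gives the desired approximation factor.

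The only mildly subtle step is the existence argument for $T$, where I need to verify that distinct vertices of $C$ can be assigned to distinct optimal centers; this follows immediately once one notes that the balls $\B(v)$ are disjoint (Lemma~\ref{lm:disjoint}), so any choice of one optimal center per $\B(v)$ yields $|C|$ distinct elements. I do not expect a genuinely hard step: the argument is a clean combination of (i) guessing $\opt$ by enumerating edge lengths, (ii) the greedy $2\,\opt$-cover structure of $C$, and (iii) matroid intersection producing a system of representatives across the disjoint balls that is simultaneously independent in $\calM$.
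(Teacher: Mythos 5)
Your proposal is correct and follows essentially the same route as the paper's proof: guess $\opt$ as an edge length, exhibit an intersection of $\calM$ and $\calM_\B$ of size $|C|$ by picking one optimal center inside each disjoint ball $\B(v)$, and then bound the radius by $3\,d(e_{i^*})$ via the triangle inequality. Your one addition --- explicitly invoking Lemma~\ref{lm:disjoint} to confirm the chosen representatives are distinct --- is a small point the paper glosses over, and it is handled correctly.
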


\begin{proof}
Suppose the maximum radius of any cluster in an optimal solution is $r^*$
and a set of optimal centers is $C^*$. Consider the algorithm on $G_i$ with $\dist(e_i) = r^*$
($r^*$ must be the length of some edge).
First we claim that there exists an intersection of $\calM$ and $\calM_\B$
of size $|C|$.
In fact, we show there is a subset of $C^*$ that is such an intersection.
For each node $u$, let $a(u)$ be an optimal center in $C^*$
that is at most $\dist(e_i)$ away from $u$.
Consider the set $\calS^*=\{a(u)\}_{u\in C}$.
Since $\calS^*$ is a subset of $C^*$,
it is an independent set of $\calM$ by the definition of matroid.
It is also easy to see that $a(u)\in \B(u)$ for each $u\in C$.
Therefore, $\calS^*$ is also independent in $\calM_\B$, which proves our claim.
Thus, the algorithm returns a set $\calS$ that contains exactly 1 element from each $\B(v)$ with $v\in C$.
According to the algorithm, for each $v\in V$ there exists $u \in C$ that is at most $2 \dist(e_i)$ away, and this $u$ is within distance $\dist(e_i)$ from the (unique) element in $\B(u) \cap \calS$.
Thus every node of $V$ is
within a distance $3 \dist(e_i)=3r^*$ from some center in $\calS$.
\end{proof}

\subsection{Dealing with Outliers: \rmcenter}
We now consider the outlier version of \mcenter, denoted as \rmcenter, in which
an additional parameter $p$ is given and
the goal is to place centers (which must form an independent set)
such that after excluding at most $|V|-p$ nodes as outliers,
the maximum service cost of any node is minimized.
For $p = |V|$,  we have the standard \mcenter.
In this section, we present a $7$-approximation for \rmcenter.

Our algorithm bears some similarity to the 3-approximation algorithm for
robust $k$-center by Charikar et al. \cite{charikar2001algorithms}, who
also showed that robust $k$-center with forbidden centers
cannot be approximated within $3-\epsilon$ unless P = NP.
However, their algorithm for robust $k$-center
does not directly yield any approximation ratio for the forbidden center version.
In fact, robust $k$-center with forbidden centers is a special case of \rmcenter\
since forbidden centers can be easily captured by a partition matroid.
We briefly describe the algorithm in \cite{charikar2001algorithms}. Assume we have guessed the right optimal radius
$r$. For each $v \in V$, call $\B(v, r)$ the {\it disk} of $v$
and $\B(v, 3r)$ the {\it expanded disk} of $v$.
Repeat the following step $k$ times:
Pick an uncovered vertex as a center such that its disk covers the most number of uncovered nodes, then
mark all nodes in the corresponding expanded disk as covered.
Using a clever charging argument
they showed that at least $p$ nodes can be covered, which gives a $3$-approximation.
However, their algorithm and analysis do not extend to our problem in a straightforward manner.
The reason is that even if at least $p$ nodes are covered, we may not be able to find enough centers
in the covered region due to the matroid constraint.
In order to remedy this issue, we need to search for centers in the entire graph, which
also necessitates a more careful charging argument to show that we can cover at least $p$ nodes.

\begin{algorithm}[t]
  \caption{Algorithm for \rmcenter\ on $G_i$} \label{alg:robust-mcenter}
  Initially, set $C\leftarrow \emptyset$ and mark all vertices in $V$ as uncovered.

  \While{$V$ contains uncovered vertices}
  {
    Pick an uncovered vertex $v$ such that $\B(v, \dist(e_i))$ covers the most number of uncovered elements.

    $\B(v)\leftarrow\B(v, \dist(e_i))$. ($\B(v)$ is called the disk of $v$.)

    $\E(v)\leftarrow\B(v, 3\dist(e_i))\setminus \cup_{u\in C}\E(u)$. ($\E(v)$ is called the expanded disk of $v$. This definition ensures that all expanded disks in $\{\E(u)\}_{u\in C}$ are pairwise disjoint.)

    $C\leftarrow C\cup \{v\}.$
    Mark all vertices in $\E(v)$ as covered.
  }

  Create a set $U$ of (vertex, expanded disk) pairs, as follows:
  For each $v\in V$ and $u\in C$, if $\B(v, \dist(e_i))\cap \B(u, 3\dist(e_i))\ne \emptyset$,
  we add $(v, \E(u))$ to $U$. The weight $w((v, \E(u)))$ of the pair $(v, \E(u))$ is $|\E(u)|$.

  Define two matroids $\calM_1$ and $\calM_2$ over $U$
  as follows:
  { \begin{itemize}
  \item A subset $\{(v_i, \E(u_i))\}$ is independent
  in $\calM_1$ if all
    $v_i$'s in the subset are \newline distinct and form an independent
    set in $\calM$.
  \item  A subset $\{(v_i, \E(u_i))\}$ is independent in $\calM_2$ if
  all $\E(u_i)$'s in the subset are distinct. \newline (It is easy to see
  $\calM_2$ is a partition matroid.) \end{itemize} }

  Solve the matroid intersection problem between $\calM_1$ and
  $\calM_2$ optimally (note that the independence oracles for $\calM_1$ and $\calM_2$ can be easily simulated in polynomial time). Let $\calS$ be an optimal intersection. If
  $w(\calS)<p$, then we declare a failure and try the next $G_i$.
  Otherwise, we succeed and return $V(\calS)$ as the set of centers, where
  $V(\calS)=\{v\mid (v, \E(u))\in \calS \textrm{~for some~}u \in C \}$.
\end{algorithm}

Now we describe our algorithm and prove its performance guarantee.
For each $1 \le i \le {|V|\choose 2}$, we run Algorithm
\ref{alg:robust-mcenter} on the graph $G_i$ defined as before.
We need the following simple
lemma.

\begin{lemma}
\label{lm:m1}
$\calM_1$ is a matroid.
\end{lemma}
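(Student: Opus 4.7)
The plan is to verify the three matroid axioms directly, using the natural projection $\pi:U\to V$ defined by $\pi((v,\E(u)))=v$. In this notation a subset $S\subseteq U$ is independent in $\calM_1$ precisely when $\pi|_S$ is injective and $\pi(S)$ is independent in $\calM$. The whole argument is then the observation that imposing injectivity on one side of a projection and matroid independence on the other side still yields a matroid.

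The first two axioms are immediate. The empty set is independent because $\pi|_\emptyset$ is vacuously injective and $\emptyset\in\calI$. For downward closure, given $S\in\calI_1$ and $S'\subseteq S$, the restriction $\pi|_{S'}$ inherits injectivity from $\pi|_S$, and $\pi(S')\subseteq\pi(S)\in\calI$ is independent in $\calM$ by the downward closure of $\calM$.

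The only real content is the exchange axiom. I would take $A,B\in\calI_1$ with $|A|>|B|$. Since $\pi$ is injective on each of them, $|\pi(A)|=|A|>|B|=|\pi(B)|$ and both $\pi(A),\pi(B)\in\calI$. Applying the exchange axiom of $\calM$ produces an element $v\in\pi(A)\setminus\pi(B)$ such that $\pi(B)\cup\{v\}\in\calI$. Lifting back: there is a unique pair $(v,\E(u))\in A$ with $\pi((v,\E(u)))=v$, and this pair cannot lie in $B$ because $v\notin\pi(B)$. Adjoining it to $B$ keeps $\pi$ injective on the enlarged set (since $v$ is new) and makes the image equal $\pi(B)\cup\{v\}\in\calI$, so $B\cup\{(v,\E(u))\}\in\calI_1$, as required.

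I do not expect any serious obstacle: the construction is exactly the standard way of pulling a matroid back along a map while requiring the map to be injective on independent sets, and the exchange axiom transfers through $\pi$ essentially for free. The only thing to be careful about is to track that the pair lifted from $v\in\pi(A)\setminus\pi(B)$ is genuinely in $A\setminus B$, which follows from the injectivity of $\pi|_A$ together with $v\notin\pi(B)$.
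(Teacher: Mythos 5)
Your proof is correct and follows essentially the same route as the paper: both arguments reduce the exchange axiom of $\calM_1$ to that of $\calM$ via the projection onto the vertex coordinate (the paper writes $V(A)$ where you write $\pi(A)$), using injectivity to equate $|A|$ with $|\pi(A)|$ and then lifting the exchanged vertex back to the unique pair in $A$. If anything, you spell out more carefully than the paper the final step that the lifted pair lies in $A\setminus B$ and preserves injectivity, which the paper dismisses as ``easy to see.''
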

\begin{proof}
It is straightforward to verify that the first and second matroid properties hold.
We only need to verify the third property.
Suppose $A$ and $B$ are two independent sets of $\calM_1$ and $|A|>|B|$.
We know the set $V(A)$ (resp., $V(B)$) of vertices that appear in $A$ (resp.,
$B$) is an independent set of $\calM$.
Since $|V(A)|=|A|$ and $|V(B)|=|B|$, $|V(A)|>|V(B)|$. Hence, there is a vertex $v\in V(A)\setminus V(B)$
such that $V(B)\cup \{v\}$ is independent.
We add to $B$ the pair in $A$ that involves $v$ and it is easy to see the resulting set is also independent in $\calM_1$.
\end{proof}

\begin{theorem}
Algorithm \ref{alg:robust-mcenter} produces a $7$-approximation for
\rmcenter.
\end{theorem}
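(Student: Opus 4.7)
For each pair $(v, \E(u)) \in \calS$, the defining condition $\B(v, \dist(e_i)) \cap \B(u, 3\dist(e_i)) \ne \emptyset$ forces $\dist(v, u) \le 4\dist(e_i)$, which together with $\E(u) \subseteq \B(u, 3\dist(e_i))$ yields $\dist(v, w) \le 7\dist(e_i)$ for every $w \in \E(u)$ by the triangle inequality. Since the expanded disks $\{\E(u)\}_{u \in C}$ are pairwise disjoint by construction, the returned centers $V(\calS)$ collectively cover $\sum_{(v,\E(u)) \in \calS} |\E(u)| = w(\calS)$ vertices within distance $7\dist(e_i)$. It therefore suffices to show that the algorithm does not fail on the iteration $i^*$ with $\dist(e_{i^*}) = r^*$, where $r^*$ is the optimal radius; since the $G_i$ are processed in order of increasing edge length, the first successful iteration has $\dist(e_i) \le r^*$ and the output is a $7$-approximation.

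\textbf{Plan for success at the correct radius.} Consider $i = i^*$. Let $C^* \in \calI$ be an optimal set of centers, and partition the $|P| \ge p$ non-outliers as $P = \bigsqcup_j \hat O_j$, with $\hat O_j$ consisting of the non-outliers assigned to $c^*_j$ (so $\dist(c^*_j, w) \le r^*$ for every $w \in \hat O_j$). A direct adaptation of the proof of Lemma~\ref{lm:disjoint}, with $3\dist(e_i)$ replacing $2\dist(e_i)$, shows that distinct greedy picks in $C$ are separated by more than $3\dist(e_i)$, because the later pick must be uncovered when chosen and therefore lies outside the earlier pick's expanded disk. I would exhibit an explicit $\calS' \in \calI_1 \cap \calI_2$ with $w(\calS') \ge p$; since $\calS$ is a maximum-weight common independent set, this gives $w(\calS) \ge p$ and the algorithm succeeds.

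\textbf{Construction and main obstacle.} For each $u \in C$ whose expanded disk meets $P$, I would pick a witness $w_u \in \E(u) \cap P$ and set $x(u) := c^*_{j(w_u)}$, the optimal center to which $w_u$ is assigned. The inclusion $w_u \in \B(x(u), \dist(e_i)) \cap \B(u, 3\dist(e_i))$ certifies $(x(u), \E(u)) \in U$, and independence in $\calM_2$ is automatic since distinct $u$'s yield distinct expanded disks. Provided the $x(u)$'s can be chosen to be distinct elements of $C^*$ (and hence an independent set of $\calM$), the disjointness of the $\E(u)$'s yields
\[
 w(\calS') \;=\; \sum_{u} |\E(u)| \;\ge\; \Bigl|\bigcup_{u} \E(u) \cap P\Bigr| \;=\; |P| \;\ge\; p,
\]
which closes the argument. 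The hard part is precisely this injectivity of $x$. For non-outlier picks $u, u' \in C \cap P$ with $w_u = u$ and $w_{u'} = u'$, a collision $x(u) = x(u')$ would force $\dist(u, u') \le 2\dist(e_i)$, contradicting the separation bound; this is essentially the content of Lemma~\ref{lm:disjoint}. But when $u$ is an outlier pick the witness $w_u$ need not equal $u$, and the naive bound only forces $\dist(u, u') \le 5\dist(e_i)$, which is consistent with the greedy separation. I would handle this by recasting the witness selection as a system of distinct representatives in the bipartite graph whose left side is $\{u \in C : \E(u) \cap P \ne \emptyset\}$, whose right side is $C^*$, and whose edges are $(u, c^*_j)$ whenever $\hat O_j \cap \E(u) \ne \emptyset$, verifying Hall's condition by exploiting the disjointness of the $\E(u)$'s and the partition $P = \bigsqcup_j \hat O_j$ (and, if needed, a matroid-exchange step among the first coordinates). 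This injectivity step is precisely what makes the outlier argument more involved than the non-outlier one for Algorithm~\ref{alg:mcenter}, where the witnesses coincide with the greedy picks themselves and Lemma~\ref{lm:disjoint} kills all conflicts in a single line.
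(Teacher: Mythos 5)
Your first paragraph (the factor-$7$ bound given success) and your separation bound $\dist(u,u')>3\dist(e_i)$ for distinct greedy picks are both correct and match the paper, as does your overall plan: exhibit a common independent set of $\calM_1,\calM_2$ of weight at least $p$ built from the optimal centers $C^*$, so that optimality of the matroid intersection forces success. The gap is exactly where you flag it, and your proposed repair does not work: Hall's condition is \emph{false} in the bipartite graph you define. An optimal disk $\B(c^*_j,r^*)$ has diameter only $2\dist(e_i)$, while greedy picks are guaranteed to be merely $3\dist(e_i)$ apart; two expanded disks $\E(u_1)$ and $\E(u_2)$ can therefore each clip a piece of the same optimal cluster (take $a,b\in\B(c^*_j,r^*)$ with $\dist(a,b)=2\dist(e_i)$, $a\in\E(u_1)$, $b\in\E(u_2)$, $\dist(u_1,a)=\dist(u_2,b)=3\dist(e_i)$, so $\dist(u_1,u_2)\le 8\dist(e_i)$, fully consistent with the separation bound). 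If that cluster is the only portion of $P$ either expanded disk touches, then $W=\{u_1,u_2\}$ has $|N(W)|=1<|W|$, no system of distinct representatives exists, and your intended family of pairs cannot even be formed; a matroid-exchange step among first coordinates cannot help, since the obstruction is one of counting, not of independence in $\calM$.

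The deeper problem is that your plan insists on matching \emph{every} greedy center whose expanded disk meets $P$ to its own optimal center, which is more than the instance supplies. The paper's construction is structurally different: it processes $v_1,\dots,v_k$ in greedy order, adds at most one pair $(v,\E(v_j))$ per optimal disk $\O(v)$ (after which $\O(v)$ is declared entirely charged), and compensates for optimal clusters straddling several expanded disks by a charging argument in which nodes of $\O(v)$ may be charged not to themselves but to \emph{other} nodes of the greedy disk $\B(v_j)$ (rule II). That step is valid precisely because line 3 of Algorithm \ref{alg:robust-mcenter} chooses $v_j$ so that $\B(v_j)$ covers the maximum number of uncovered nodes, so $\B(v_j)$ contains enough fresh vertices to absorb the charge; the weight $w(\calS')=\sum|\E(v_j)|$ then dominates the number of charged nodes, which is $|\cup_{v\in C^*}\O(v)|\ge p$. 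Notice that your argument never invokes this max-coverage property of the greedy choice, using only the separation bound --- a telltale sign of the gap, since the witness construction genuinely hinges on it.
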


\begin{proof}
Assume the maximum radius of any cluster in an optimal solution is $r^*$
and the set of optimal centers is $C^*$.
For each $v\in C^*$, let $\O(v)$ denote the optimal disk $\B(v, r^*)$.
As before, we claim that our algorithm succeeds if $\dist(e_i) = r^*$.
It suffices to show the existence of an intersection of $\calM_1$ and $\calM_2$
with a weight at least $p$.
We next construct such an intersection $\calS'$ from the optimal center set $C^*$.
The high level idea is as follows.
Let the disk centers in $C$ be
$v_1, v_2, \ldots, v_k$ (according to the order that our algorithm chooses them).
Note that $v_1, v_2, \ldots, v_k$ are the centers chosen by the greedy procedure in the first part of the algorithm, but not the centers returned at last.
We process these centers one by one.
Initially, $\calS'$ is empty.
As we process a new center $v_j$, we may add $(v, \E(v_j))$ for some $v\in C^*$ to $\calS'$.
Moreover, we charge each newly covered node
in any optimal disk
to some nearby node in the expanded disk $\E(v_j)$.
(Note that this is the key difference between our charging argument
and that of \cite{charikar2001algorithms};
in \cite{charikar2001algorithms}, a node may be charged to some node far away.)
We maintain that all nodes in $\cup_{v\in C^*} \O(v)$ covered by $\cup_{j'=1}^j \E(v_{j'})$
are charged after processing $v_j$.
Thus, eventually,  all nodes covered by the optimal solution (i.e., $\cup_{v\in C^*} \O(v)$) are charged to the expanded disks selected by our algorithm.
We also make sure that each node in any expanded disk in $\calS'$ is being charged to at most once.
Therefore, the weight of $\calS'$ is at least $|\cup_{v\in C^*} \O(v)|\geq p$.

Now, we present the details of the construction of $\calS'$.
If every node in $\O(v)$ for some $v\in C^*$ is charged, we say $\O(v)$ is {\em entirely charged}.
Consider the step when we process $v_j \in C$.
We distinguish the following cases.

\begin{enumerate}
\item
Suppose there is a node $v\in C^*$ such that
$\O(v)$ is not entirely charged and $\O(v)$ intersects $\B(v_j)$.
Then add $(v, \E(v_j))$ to $\calS'$
(if there are multiple such $v$'s, we only add one of them).
We charge the newly covered nodes in $\cup_{v\in C^*}\O(v)$
(i.e., the nodes in $(\cup_{v\in C^*}\O(v))\cap \E(v_j)$)
 to themselves
(we call this charging rule I).
Note that $\O(v)$ is entirely charged  after this step since
$\O(v)\subseteq \B(v_j, 3r^*)$.

\item Suppose $\B(v_j)$ does not intersect
$\O(v)$ for any $v\in C^*$, but there is some node $v\in C^*$ such
that $\O(v)$ is not entirely charged and $\O(v)\cap \E(v_j)\ne
\emptyset$. Then we add $(v, \E(v_j))$ to $\I$ and charge all newly
covered nodes in $\O(v)$ (i.e., the node in $\O(v)\cap \E(v_j)$) to
$\B(v_j)$ (we call this charging rule II). Since $\B(v_j)$ covers
the most number of uncovered elements when $v_j$ is added, there are
enough vertices in $\B(v_j)$ to charge. Obviously, $\O(v)$ is
entirely charged after this step. If there is some other node $u\in
C^*$ such that $\O(u)$ is not entirely charged and $\O(u)\cap
\E(v_j)\ne \emptyset$, then we charge each newly covered node (i.e.,
nodes in $\O(u)\cap \E(v_j)$) in $\O(u)$ to itself using
rule I.

\item If $\E(v_j)$ does not intersect with any optimal disk $\O(v)$
that is not entirely charged, then we simply
skip this iteration and continue to the next $v_j$.
\end{enumerate}

It is easy to see that all covered nodes in $\cup_{v\in C^*} \O(v)$
are charged in the process and each node is being charged to at most once.
Indeed, consider a node $u$ in $\B(v_j)$.
If $\B(v_j)$ intersects some $\O(v)$, then $u$ may be charged by rule I
and, in this case, no further node can be charged to $u$ again.
If $\B(v_j)$ does not intersect any $\O(v)$, then $u$ may be charged by rule II.
This also happens at most once. It is
obvious that in this case, no node can be charged to $u$ using rule I.
For a node $u \in \E(v_j)\setminus\B(v_j)$, it can be charged at most once
using rule I.
Moreover, by the charging process,
all nodes in $\cup_{v\in C^*} \O(v)$
are charged to the nodes in some expanded disks that appear in $\calS'$.
Therefore, the total weight of $\calS$ is at least $p$.
We can see that each vertex in $V(\calS')$ is also in $C^*$
and appears at most one. Therefore, $\calS'$ is independent in $\calM_1$.
Clearly, each $\E(u)$ appears in $\calS'$ at most once. Hence,
$\calS'$ is also independent in $\calM_2$,
which proves our claim.

Since $\calS$ is an optimal intersection, we know the expanded disks in $\calS$
contain at least $p$ nodes.
By the requirement of $\calM_1$, we can guarantee that
the set of centers forms an independent set in $\calM$.
For each $(v, \E(u))$ in $\calS$, we can see that every node $v'$ in $\E(u)$ is
within a distance $7\dist(e_i)$ from $v$, as follows.
Suppose $u'\in \B(v, \dist(e_i))\cup \B(u, 3\dist(e_i))$ (because $\B(v, \dist(e_i))\cup \B(u, 3\dist(e_i))\ne \emptyset$ for any pair $(v,\E(u))\in U$).
By the triangle inequality, $\dist(v',v)\leq \dist(v',u)+\dist(u,u')+\dist(u',v)
\leq 3\dist(e_i)+3\dist(e_i)+\dist(e_i)=7\dist(e_i).$
This completes the proof of the theorem.
\end{proof}

\section{The Knapsack Center Problem}

In this section, we study the \multiknap\ problem and its outlier version. Recall that an input of \multiknap\ consists of a metric space $(\V, d)$, $m$ nonnegative weight functions $w_1,
\ldots, w_m$ on $\V$, and $m$ budgets
$\kb_1,\ldots,\kb_m$.
The goal is to select a
set of centers $\calS \subseteq \V$ with
$w_i(\calS) \leq \kb_i$ for all $1\leq i\leq m$, so as to minimize the maximum service cost of any vertex in $\V$.
In the outlier version of \multiknap, we are given an additional parameter $p \leq |V|$, and the objective is to minimize $cost_p(\mathcal{S}) := \min_{\V' \subseteq \V: |\V'|\geq p}\max_{v
\in \V'}\min_{i\in \calS}d(v,i)$, i.e., the maximum service cost of any non-outlier node
after excluding at most $|V|-p$ nodes as outliers.

\subsection{Approximability of \multiknap}
When there is only one knapsack constraint (i.e., $m=1$), the
problem degenerates to the weighted $k$-center problem for which a
3-approximation algorithm exists \cite{jacm86}. However, as we show in Theorem \ref{thm:hard_multi_knap}, the
situation changes dramatically
even if there are only two knapsack constraints.

\begin{theorem}\label{thm:hard_multi_knap}
For any $f>0$, if there is an $f$-approximation algorithm for \multiknap with two knapsack constraints,
then $\P=\NP$.
\end{theorem}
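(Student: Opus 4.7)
The plan is a gap reduction from \textsc{Partition} that produces an \emph{infinite} inapproximability gap: YES instances of \textsc{Partition} get mapped to \multiknap instances with $\opt=0$, while NO instances get mapped to instances with $\opt$ at least some fixed positive constant. Given a \textsc{Partition} input with positive integers $a_1,\dots,a_n$ summing to $2A$, the construction introduces $n$ co-located pairs $(u_i,v_i)$ with $d(u_i,v_i)=0$, places distinct pairs at mutual distance $D>0$, and sets $w_1(u_i)=w_2(v_i)=a_i$, $w_1(v_i)=w_2(u_i)=0$, and $\kb_1=\kb_2=A$.

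For the YES direction, if $I\subseteq[n]$ satisfies $\sum_{i\in I}a_i=A$, the set $\calS=\{u_i:i\in I\}\cup\{v_j:j\notin I\}$ places a center at every pair's location, so every vertex has service cost $0$; the weights evaluate to $w_1(\calS)=\sum_{i\in I}a_i=A=\kb_1$ and $w_2(\calS)=\sum_{j\notin I}a_j=A=\kb_2$, making $\calS$ feasible and yielding $\opt=0$. For the NO direction, suppose some feasible $\calS$ has maximum service cost strictly less than $D$. Then every pair must contain an element of $\calS$, so $X\cup Y=[n]$ for $X=\{i:u_i\in\calS\}$ and $Y=\{i:v_i\in\calS\}$. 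The budget constraints yield $\sum_{i\in X}a_i\leq A$ and $\sum_{i\in Y}a_i\leq A$, while $X\cup Y=[n]$ forces $\sum_{i\in X}a_i+\sum_{i\in Y}a_i\geq\sum_{i=1}^n a_i=2A$. Equality must hold throughout, so $X$ and $Y$ form a disjoint partition of $[n]$ with $\sum_{i\in X}a_i=A$, contradicting the NO hypothesis. Hence every NO instance satisfies $\opt\geq D$.

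Given any $f$-approximation algorithm for \multiknap with $f$ finite, running it on the constructed instance returns a feasible solution of cost $\sol\leq f\cdot\opt$, which equals $0$ in the YES case and is at least $\opt\geq D>0$ in the NO case. Thus the test ``is $\sol<D$?'' decides \textsc{Partition} in polynomial time, forcing $\P=\NP$.

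The design insight is that $w_1$ and $w_2$ place each pair's mass $a_i$ on opposite endpoints and that the total mass $\sum_i a_i$ exactly matches $\kb_1+\kb_2$; this tightness is what makes \multiknap-feasibility coincide with a valid partition. The main technical point is the NO-direction equality chase, which is quick once the construction is in place---the real design work lies in finding an encoding that yields an infinite gap. The minor technicality that $d(u_i,v_i)=0$ gives only a pseudometric is easily resolved by perturbing to $d(u_i,v_i)=\epsilon$ for sufficiently small $\epsilon\ll D$, which does not affect any step of the argument.
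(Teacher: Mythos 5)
Your proof is correct and takes essentially the same route as the paper's: the identical reduction from \textsc{Partition} using co-located pairs with complementary weights $w_1,w_2$ and budgets equal to half the total sum, combined with the observation that a YES instance has $\opt=0$, so any $f$-approximation must return an optimal (cost-$0$) solution and hence decides \textsc{Partition}. Your equality chase in the NO direction (which cleanly handles pairs where both endpoints are chosen, where the paper instead says ``w.l.o.g.\ exactly one is taken'') and your remark about the co-location giving only a pseudometric are minor refinements, not a different approach---though note that after perturbing to $d(u_i,v_i)=\epsilon$ the YES-case optimum becomes $\epsilon$ rather than $0$, so ``sufficiently small'' must mean $\epsilon<D/f$, which is fine since $f$ is fixed.
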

\begin{proof}
To prove the theorem, we present a reduction from the partition
problem, which is well-known to be NP-hard \cite{book_npc}, to the
\multiknap problem with two knapsack constraints. In the partition problem, we are given a
multiset of positive integers $\mathcal{S}=\{s_1, s_2, \ldots,
s_n\}$, and the goal is to decide whether $\calS$ can be partitioned
into two subsets such that the sum of numbers in one subset equals
the sum of numbers in the other subset.

Given an instance $\calS=\{s_1,s_2,\ldots,s_n\}$ of the partition
problem, we construct an instance $\calI$ of the \multiknap problem
as follows. The set of clients is $\V = \{a_i,b_i~|~1\leq i\leq
n\}$. The distance metric $d$ is defined as $d(a_i,b_i)=0$ for all
$1\leq i\leq n$, and $d(a_i,a_j)=d(a_i,b_j)=d(b_i,b_j)=1$ for all
$i\neq j$. It is easy to verify that $d$ is indeed a metric. Every
client in $\V$ has a unit demand. There are two weight functions $w_1$
and $w_2$ specified as follows: for each $1\leq i\leq n$,
$w_1(a_i)=s_i$, $w_1(b_i)=0$, $w_2(a_i)=0$, and $w_2(b_i)=s_i$.
The two corresponding weight budgets are $\kb_1=\kb_2=T/2$, where
$T=\sum_{j=1}^{n}s_j$. This finishes the construction of $\calI$.

We show that $\calS$ can be partitioned into two subsets of equal
sum if and only if $\calI$ has a solution of cost 0. First consider
the ``if'' direction. Assume that $\calI$ admits a solution of cost
0. Clearly, for each $1\leq i\leq n$, the solution must take at
least one of $\{a_i,b_i\}$ as a center, and we assume w.l.o.g. that
it takes exactly one of $a_i$ and $b_i$ (just choosing an arbitrary one
if both are taken). Let $I_1$ be the set of indices $i$ for which
$a_i$ is taken as a center in the solution. Then
$I_2=\{1,2,\ldots,n\} \setminus I_1$ consists of all indices $i$ for
which $b_i$ is taken by the solution. Considering the first weight
constraint, we have $T/2=\kb_1 \geq \sum_{i\in
I_1}w_1(a_i)+\sum_{i\in I_2}w_1(b_i)=\sum_{i\in I_1}s_i$. Similarly,
by the second weight constraint, we get $T/2 \geq \sum_{i\in
I_2}s_i$. Since $\sum_{i\in I_1}s_i+\sum_{i\in
I_2}s_i=\sum_{i=1}^{n}s_i=T$, it holds that $\sum_{i\in
I_1}s_i=\sum_{i\in I_2}s_i=T/2$. Therefore, $\calS$ can be
partitioned into two subsets of equal sum.

We next prove the ``only if'' part. Suppose there exists $I_1
\subseteq \{1,2,\ldots,n\}$ such that $\sum_{i\in I_1}s_i=T/2$. In
the instance $\calI$, we take $\calT:=\{a_i~|~i\in
I_1\}\cup\{b_j~|~j\in \{1,2,\ldots,n\}\setminus I_1\}$ as the set of
centers. It only remains to show that $\calT$ satisfies both the weight
constraints, which is easy to verify: $\sum_{v\in
\calT}w_1(v)=\sum_{i\in I_1}s_i=T/2\leq \kb_1$, and $\sum_{v\in
\calT}w_2(v)=\sum_{j\in \{1,2,\ldots,n\}\setminus
I_1}s_j=T-\sum_{j\in I_1}s_j=T/2\leq \kb_2$. This proves the ``only
if'' direction.

Since the optimal objective value of $\calI$ is 0, any
$f$-approximate solution is in fact an optimal one. Hence, if
\multiknap with two constraints and unit demands allows an
$f$-approximation algorithm for any $f>0$, then the partition
problem can be solved in polynomial time, which implies $\P = \NP$.
The proof of Theorem~\ref{thm:hard_multi_knap} is thus complete.
\end{proof}

It is then natural to ask whether a
constant factor approximation can be obtained if the constraints can be relaxed slightly. We show in Theorem~\ref{thm:apx_multiknap} that this is achievable (even for the demand version).
Before proving the theorem we first present some high-level ideas of our algorithm, shown as Algorithm~\ref{alg:multi_knap}. The algorithm first guesses the optimal cost $\opt$, and then chooses a collection of disjoint disks of radius $\opt$ according to some rules. It can be shown that there exists a set of centers consisting of exactly one point from each disk that gives a 3-approximate solution and satisfies all the knapsack constraints.
We then reduce the remaining task to another problem called the \emph{\groupknap problem}, which will formally be defined in the following proof.

\begin{theorem}\label{thm:apx_multiknap}
For any fixed $\epsilon>0$, there is a 3-approximation algorithm for \multiknap\ with a constant number of knapsack constraints, which is guaranteed to satisfy one constraint and
violate each of the others by at most a factor of
$1+\epsilon$.
\end{theorem}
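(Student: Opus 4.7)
My plan is to combine the threshold method with a reduction to the \groupknap\ problem, which is then handled by rounding plus dynamic programming.

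\textbf{Threshold method and group formation.} First, I would guess the optimal cost by trying each pairwise distance $\dist(u,v)$ as a candidate $r$; there are $O(|\V|^2)$ such guesses, and we return the best successful outcome. For a fixed guess $r$, greedily construct a set of cluster centers $v_1, \ldots, v_\ell$ by repeatedly picking any uncovered vertex $v_j$ and marking $\B(v_j, 2r)$ as covered. This makes the disks $G_j := \B(v_j, r)$ pairwise disjoint (since $\dist(v_j, v_{j'}) > 2r$) while their $2r$-expansions cover $\V$. When $r = \opt$, the optimal center set $\calS^*$ must contain a point within distance $r$ of each $v_j$; since the $G_j$ are disjoint, $\calS^*$ contains a system of distinct representatives $\calT^* = \{c_1, \ldots, c_\ell\}$ with $c_j \in G_j$. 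Because $\calT^* \subseteq \calS^*$, it inherits all knapsack constraints; moreover, for any selection $\{x_j \in G_j\}$ every vertex lying in $\B(v_j, 2r)$ is within distance $3r$ of $x_j$ by the triangle inequality. Thus the remaining task is the \groupknap\ problem: choose one element per group such that $\sum_j w_i(x_j) \le \kb_i$ for all $i$, and this is feasible because $\calT^*$ is a witness.

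\textbf{Rounding and dynamic programming.} To handle $m$ knapsack constraints, round each weight $w_i(v)$ up to the nearest multiple of $\delta_i := \epsilon \kb_i / |\V|$ for $i = 2, \ldots, m$, leaving $w_1$ untouched. Each rounded weight takes one of $O(|\V|/\epsilon)$ distinct values, and the rounded total of any $\ell$-element selection exceeds its true total by at most $\ell \delta_i \le \epsilon \kb_i$. Now run a DP with state $(j, W_2, \ldots, W_m)$ that records the minimum true $\sum_{j' \le j} w_1(x_{j'})$ achievable by a selection from $G_1, \ldots, G_j$ whose rounded totals on the other constraints are exactly $(W_2, \ldots, W_m)$. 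Transitions enumerate the candidates in $G_{j+1}$. The state space has size $|\V| \cdot (|\V|/\epsilon)^{m-1}$, which is polynomial for constant $m$. After filling the table, search among the states $(\ell, W_2, \ldots, W_m)$ with $W_i \le (1+\epsilon)\kb_i$ whose stored value is $\le \kb_1$, and reconstruct the corresponding selection $\calS$ by backtracking.

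\textbf{Correctness and main obstacle.} The DP succeeds whenever $r \ge \opt$, because the rounded version of $\calT^*$ satisfies $W_i \le w_i(\calT^*) + \ell \delta_i \le (1+\epsilon)\kb_i$ for $i \ge 2$ while $w_1(\calT^*) \le \kb_1$ is untouched. The returned set then has cost $\le 3r$, violates constraint 1 not at all, and violates each of constraints $2, \ldots, m$ by at most a factor of $1+\epsilon$. The main technical point is the calibration of the rounding: the additive rounding error per item is $\delta_i$ and is summed over $\ell \le |\V|$ items, so setting $\delta_i = \epsilon \kb_i / |\V|$ precisely balances a polynomial state space against the $(1+\epsilon)$ violation target; smaller $\delta_i$ blows up the DP, larger $\delta_i$ weakens the guarantee. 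The demand version is handled identically, except that the greedy ball construction processes uncovered vertices in decreasing order of $r(v)$ using weighted distance, following the standard weighted $k$-center technique of \cite{jacm86}, so that the factor-$3$ bound still holds under the weighted metric.
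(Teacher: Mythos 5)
Your proposal is correct and takes essentially the same route as the paper's proof: guess $\opt$ by thresholding over the $O(|\V|^2)$ candidate values, greedily extract pairwise-disjoint unit-radius disks whose one-element-per-disk selections both contain a feasible witness (a system of distinct representatives drawn from the optimal center set) and yield cost $3\opt$ by the triangle inequality, then solve the resulting \groupknap instance by a scaled dynamic program that keeps one weight exact and rounds the rest. The only differences are cosmetic: you round weights up with per-constraint granularity $\epsilon\kb_i/|\V|$ whereas the paper rounds down against a global scale $\epsilon w_{\max}/n$ (which is why it needs the $w_{\max}\leq \kb_j$ preprocessing that your rounding-up variant avoids), and for the demand version the paper spells out the max-demand-first selection rule and the chain $r(j)d(j,t_i)\leq r(j)d(j,i')+r(i)d(i,i')+r(i)d(i,t_i)\leq 3\opt$ using $r(i)\geq r(j)$, which your final paragraph correctly identifies but leaves as a sketch.
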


In what follows we prove Theorem~\ref{thm:apx_multiknap}.
We first present our algorithm for \multiknap\ in Algorithm~\ref{alg:multi_knap} that we use to prove Theorem \ref{thm:apx_multiknap}. The algorithm works for the more general version where each vertex $v$ has a demand $r(v)$ and the service cost of $v$ is $\min_{i\in \calS}r(v)d(v,i)$ when taking $\calS$ as the set of centers.

\begin{algorithm}[t]
  \caption{Algorithm for \multiknap with multiple constraints} \label{alg:multi_knap}
  Guess the optimal objective value $\opt$.

  For each client $v\in \V$, let $\B(v) \leftarrow \B(v, \opt)$ be the \emph{disk} of $v$. Let $\calT \leftarrow \emptyset$.

  \While{there exists $i \in \V$ such that $\B(i) \cap \B(j) = \emptyset$ for all $j\in \calT$}
  {
  Choose such an $i$ with maximum demand, and let $\calT \leftarrow \calT \cup \{i\}$.
  }

  Create an instance $\calI$ of the \groupknap problem as $\calI=(\{\B(i)\}_{i\in \calT}, \{w_j,\kb_j\}_{1\leq j\leq m})$ (recall that $m=O(1)$), and get a solution $\calS$ by applying the algorithm indicated by Lemma \ref{lem:group}.

  \Return $\calS$
\end{algorithm}

Given an instance of the \multiknap problem, suppose Algorithm \ref{alg:multi_knap} correctly guesses the optimal objective value $\opt$. (This can be equivalently realized by
running the algorithm for all ${|V| \choose 2}$ possibilities and taking the best solution among all the candidates.)
The algorithm greedily finds a collection of mutually
disjoint disks $\{\B(i)\}_{i\in\calT}$, and then constructs a set of centers
by selecting exactly one point from each disk using some algorithm for the \groupknap problem, which we will define later.

Call a set $\calS \subseteq \V$ \emph{standard} if $\calS$ consists
of exactly one point from each of the disks $\{\B(i)\}_{i\in\calT}$.
We first show that there exists a standard set $\calS$ such that
$w_j(\calS)\leq \kb_j$ for all $1\leq j\leq m$, i.e., $\calS$
fulfills all the knapsack constraints. Suppose $\calO\subseteq \V$
is the set of centers opened in some optimal solution. Then, for
each $i\in\calT$, there exists $j\in \calO$ such that $r(i)
d(i,j)\leq \opt$, and thus $j\in \B(i)$. Hence, we can choose from
each $\B(i)$ exactly one point that belongs to $\calO$, and these
points are distinct because the disks are pairwise disjoint. Let
$\calS$ denote the set of these points. Clearly, $\calS$ is a standard
and is a subset of $\calO$, and thus $w_j(\calS)\leq w_j(\calO)\leq
\kb_j$ for all $1\leq j\leq m$. This proves the existence of a
standard set that satisfies all the knapsack constraints.

We will reduce the remaining task to another problem called the \groupknap problem, which we define as follows. Suppose we are given a collection of pairwise
disjoint sets $\{\calS_i\}_{1\leq i\leq n}$. Let $\calS
=\bigcup_{i=1}^{n}\calS_i$. For some fixed integer $m\geq 1$, there
are $m$ nonnegative weight functions defined on the items of $\calS$,
which we denote by $w_1, \ldots, w_m$, and $m$ weight limits
$\kb_1,\ldots,\kb_m$. A \emph{solution} is a subset $\calS'
\subseteq \calS$ that consists of exactly one element from each of
the $n$ sets $\calS_1, \ldots, \calS_n$. The goal is find a solution
$\calS'$ such that $w_j(\calS')\leq \kb_j$ for all $1\leq j\leq m$,
provided that such solution exists. For our purpose, we
require the number of constraints to be a constant. This problem is
new to our knowledge, and may be useful in other applications. By Lemma~\ref{lem:group} (which will be presented and proved later), we
can find in polynomial time a solution that satisfies one constraint
and violates each of the others by a small factor.

Now come back to the \multiknap\ problem.
By Lemma
\ref{lem:group}, line 6 of Algorithm \ref{alg:multi_knap} produces
in polynomial time a standard set $\calS$ that satisfies one
constraint and violates each of the others by a factor of at most $1+\epsilon$.
(We notice that, when running Algorithm \ref{alg:multi_knap}
with an incorrect value of $\opt$, there may not exist any standard
set, in which case the algorithm may return an empty set. We shall
simply ignore such solutions.)

It now only remains to show that, by designating $\calS$ as the set
of centers, the maximum service cost of any client is at most
$3\cdot \opt$.
Suppose $\calS \cap \B(i) = \{t_i\}$ for each $i\in\calT$. It
suffices to prove that, for each $j\in \V$, there exists $i\in\calT$
such that $r(j) d(j, t_i)\leq 3\cdot \opt$. We consider two cases.

\begin{enumerate}
\item $j \in \calT$. Since $t_j \in \B(j)$, we
have $r(j) d(j,t_j)\leq \opt \leq 3\opt$ by the definition of
$\B(j)$.

\item $j \not\in \calT$. Then $\B(j) \cap \B(i) \neq
\emptyset$ for some $i\in\calT$, otherwise $j$ should be added to
$\calT$ by the algorithm. Let $\mathcal{Q}=\{i\in\calT~|~\B(i) \cap
\B(j) \neq \emptyset\}$. If $r(i)<r(j)$ for all $i\in \mathcal{Q}$, then
the algorithm will choose $j$ before choosing all $i\in
\mathcal{Q}$, which contradicts with the assumption that $j\not\in
\calT$. Thus, there exists $i\in \mathcal{Q}$ for which $r(i)\geq
r(j)$. Consider this particular $i$, and choose an arbitrary $i' \in
\B(i) \cap \B(j)$. We have
\begin{eqnarray*}
r(j) d(j, t_i) &\leq& r(j)(d(j,i')+d(i,i')+d(i,t_i)) \textrm{~~~~~~~~(triangle inequality)}\\
&\leq& r(j)d(j,i') + r(i)d(i,i') + r(i)d(i,t_i) \textrm{~~(because~}r(i)\geq r(j))\\
&\leq& \opt + \opt + \opt \textrm{~~~(due to the definition of disks)} \\
&=& 3\cdot \opt.
\end{eqnarray*}
\end{enumerate}

Combining the two cases, we have shown that the service cost with centers in $\calS$ is at
most three times the optimal cost, which completes the proof.

Finally, we need the following Lemma~\ref{lem:group}, which is used in the above argument. The \groupknap\ problem is similar to the multiple knapsack problem (i.e., the knapsack problem with multiple resource constraints), and the (standard) technique for the latter can be easily adapted to solve the \groupknap\ problem (see, e.g., \cite{phdthesis/Pisinger95,books/knap04}). Another way to deduce Lemma~\ref{lem:group} is by applying the $\epsilon$-approximate Pareto curve method introduced by Papadimitriou and Yannakakis \cite{conf/focs/PapadimitriouY00}. For sake of completeness, we give a proof of Lemma~\ref{lem:group} in Appendix~\ref{apx:group}.

\begin{lemma}\label{lem:group}
For any fixed $\epsilon>0$, there is a polynomial time algorithm
that, given an instance of \groupknap for which a solution
satisfying all weight constraints exists, constructs in polynomial
time a solution that satisfies one constraint and violates each of
the others by at most a factor of $1+\epsilon$.
\end{lemma}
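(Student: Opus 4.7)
The plan is to solve \groupknap\ by a multi-dimensional knapsack style dynamic program with scaling, using the first weight function $w_1$ as the coordinate to be satisfied exactly while allowing the other $m-1$ functions a $(1+\epsilon)$ multiplicative slack. As a preliminary step, remove from every group $\calS_i$ each element $v$ with $w_j(v)>\kb_j$ for some $j$; because all weights are nonnegative, any feasible solution avoids such elements, so this preprocessing preserves the existence of a feasible selection.

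For each $j\ge 2$ set $\delta_j = \epsilon\kb_j/n$ and $\tilde{w}_j(v)=\lfloor w_j(v)/\delta_j\rfloor$. After preprocessing, $\tilde{w}_j(v)\le \lfloor n/\epsilon\rfloor$ for every surviving element, so the scaled total of any budget-respecting selection lies in $\{0,1,\ldots,\lfloor n/\epsilon\rfloor\}$. Maintain a table $A[i,t_2,\ldots,t_m]$ storing the minimum achievable $w_1$-weight of a partial solution that picks exactly one element from each of $\calS_1,\ldots,\calS_i$ and attains scaled totals $t_j$ under the remaining weight functions, with the obvious recurrence
$$A[i+1,t_2,\ldots,t_m]=\min_{v\in\calS_{i+1}}\bigl\{A[i,t_2-\tilde{w}_2(v),\ldots,t_m-\tilde{w}_m(v)]+w_1(v)\bigr\},$$
where states whose coordinates would exceed $\lfloor n/\epsilon\rfloor$ are discarded. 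Because $m$ is a constant, the table has $O(n\cdot (n/\epsilon)^{m-1})$ entries, each computed in time proportional to the relevant group's size, so the whole computation runs in polynomial time.

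For correctness, let $\calS^*$ be a feasible solution guaranteed by hypothesis. Then $\tilde{w}_j(\calS^*)\le \sum_{v\in\calS^*}w_j(v)/\delta_j\le \kb_j/\delta_j=n/\epsilon$, so $\calS^*$ indexes a cell $A[n,t_2^*,\ldots,t_m^*]$ whose stored value is at most $w_1(\calS^*)\le \kb_1$. Return any selection $\calS$ realizing some cell $A[n,t_2,\ldots,t_m]$ with value at most $\kb_1$; by construction $w_1(\calS)\le \kb_1$, while for $j\ge 2$ the cumulative truncation error yields
$$w_j(\calS)\le \sum_{v\in\calS}(\tilde{w}_j(v)+1)\delta_j = (t_j+n)\delta_j \le (n/\epsilon + n)\cdot \epsilon\kb_j/n = (1+\epsilon)\kb_j,$$
which is the stated guarantee.

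The main obstacle, and the reason the approach does not trivially extend to non-constant $m$, is controlling the size of the DP table: the factor $(n/\epsilon)^{m-1}$ stays polynomial only because $m$ is fixed, which is precisely the lemma's standing assumption. The scaling parameter $\delta_j = \epsilon\kb_j/n$ is chosen so that the per-element truncation error of $\delta_j$, summed over the $n$ chosen elements, totals $\epsilon\kb_j$ and thus translates exactly into the desired $(1+\epsilon)$ multiplicative slack; a finer $\delta_j$ would tighten the ratio at the cost of enlarging the table while still keeping it polynomial.
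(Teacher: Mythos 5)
Your proposal is correct and takes essentially the same approach as the paper's Appendix~A proof: remove elements that exceed some budget, scale down all but one weight function so that scaled values live in a range of size $O(n/\epsilon)$, run a dynamic program over the polynomial-size table of scaled totals while keeping one weight exact, and let the per-element rounding error (about $\epsilon\kb_j/n$ each, $n$ elements) sum to the $(1+\epsilon)$ multiplicative slack. The only differences are cosmetic: you scale each constraint by its own $\delta_j=\epsilon\kb_j/n$ where the paper uses a uniform $A=\epsilon w_{\max}/n$ and then invokes $w_{\max}\le\kb_j$ to get the same bound, and you designate the first constraint rather than the $m$-th as the exactly-satisfied one, a choice the paper itself notes is arbitrary.
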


\begin{algorithm}[t]
  \caption{Algorithm for \rknap} \label{alg:robust_knap}
  Guess the optimal objective value $\opt$.

  For each $v\in \V$, let $\B(v) \leftarrow \B(v,\opt)$ and
  $\E(v) \leftarrow \B(v,3\opt)$.

  $\mathcal{S} \leftarrow \emptyset; \mathcal{C} \leftarrow \emptyset$ (the points in $\mathcal{C}$ are {\em covered} and those in
  $\V \setminus \mathcal{C}$ are {\em uncovered}).

  \While{$w(\mathcal{S})<\kb$ and $\V \setminus \mathcal{C} \neq \emptyset$
  }
  {
  Choose $i \in \V \setminus \mathcal{S}$ that maximizes $\frac{|\B(i) \setminus \mathcal{C}|}{w(i)}$.

  $\mathcal{S} \leftarrow \mathcal{S} \cup \{i\}; \mathcal{C} \leftarrow \mathcal{C} \cup \E(i)$ (i.e., mark all uncovered points in $\E(i)$ as covered).
  }

  \Return $\mathcal{S}$
\end{algorithm}

\subsection{Dealing with Outliers: \rknap}\label{sec:robust_knapsack}
We now study \rknap, the outlier version of \multiknap. Here we consider the case with one knapsack constraint (with weight function $w$ and budget $\kb$) and unit demand. Our main theorem is as follows.

\begin{theorem}\label{thm:robust_apx}
There is a 3-approximation algorithm for \rknap that violates the
knapsack constraint by at most a factor of $1+\epsilon$ for any
fixed $\epsilon>0$.
\end{theorem}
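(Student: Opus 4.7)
The proof strategy is as follows. First, we enumerate over all $O(|V|^2)$ pairwise distances and, for each guess of $\opt$, run Algorithm~\ref{alg:robust_knap}; we return the best feasible solution found, so it suffices to analyze the algorithm assuming $\opt$ is guessed correctly. We need to establish three properties at termination: the coverage $|\mathcal{C}|\geq p$, the weight bound $w(\mathcal{S})\leq\kb+w_{\max}$ where $w_{\max}=\max_{v\in\V}w(v)$, and the service radius at most $3\opt$. The latter two are straightforward: the radius is immediate from $\E(v)=\B(v,3\opt)$, and the weight bound follows because before the final iteration we have $w(\mathcal{S})<\kb$, so adding one more center overshoots by at most $w_{\max}$.

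The core technical content is the coverage guarantee. Fix an optimal solution $\calO$ of weight at most $\kb$, covering a set $V^*\subseteq\V$ with $|V^*|\geq p$ at radius $\opt$; so each $v\in V^*$ lies in $\B(c^*)$ for some $c^*\in\calO$. The first ingredient is a \emph{containment lemma}: if a selected greedy center $i_t$ satisfies $\B(i_t)\cap\B(c^*)\neq\emptyset$ for some $c^*\in\calO$, then two applications of the triangle inequality give $\B(c^*)\subseteq\B(i_t,2\opt)\subseteq\E(i_t)\subseteq\mathcal{C}_t$; contrapositively, if any point of $V^*$ remains uncovered at the end, then the optimal ball $\B(c^*)$ containing it must be disjoint from every chosen $\B(i_t)$. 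The second ingredient is the \emph{greedy ratio inequality}: at each iteration $t$ with $V^*\not\subseteq\mathcal{C}_{t-1}$, the per-weight marginal ball-coverage of the greedy choice satisfies
\[
\frac{|\B(i_t)\setminus\mathcal{C}_{t-1}|}{w(i_t)}\;\geq\;\max_{c^*\in\calO}\frac{|\B(c^*)\setminus\mathcal{C}_{t-1}|}{w(c^*)}\;\geq\;\frac{\sum_{c^*\in\calO}|\B(c^*)\setminus\mathcal{C}_{t-1}|}{w(\calO)}\;\geq\;\frac{|V^*\setminus\mathcal{C}_{t-1}|}{\kb},
\]
where the first inequality is the greedy rule, the second is a max-over-average bound, and the third uses that every uncovered point of $V^*$ lies in some $\B(c^*)$ with $c^*\in\calO$.

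The main obstacle is combining these ingredients to conclude $|\mathcal{C}_T|\geq p$ when the algorithm terminates with $w(\mathcal{S})\geq\kb$, since a naive cardinality-only recursion of the form $p-|\mathcal{C}_t|\leq(p-|\mathcal{C}_{t-1}|)(1-w(i_t)/\kb)$ would only yield $|\mathcal{C}_T|\geq (1-1/e)p$. The algebraic manipulation---which the authors describe as working directly with set-level inequalities rather than just cardinalities---exploits the containment lemma to argue that any optimal ball $\B(c^*)$ still containing uncovered points contributes its \emph{full} cardinality to the sum in the ratio inequality, and that once a greedy choice intersects such a ball it absorbs the ball entirely into $\E(i_t)\subseteq\mathcal{C}$. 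Encoding this set-level relation (tracking which optimal balls remain ``live'' rather than just a real-valued uncovered mass) and summing over iterations should yield the tight coverage bound $|\mathcal{C}_T|\geq p$.

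Finally, the additive $w_{\max}$ overshoot is converted to the stated multiplicative $(1+\epsilon)$ violation via the standard partial-enumeration trick: we guess the $O(1/\epsilon)$ heaviest centers of the optimum (there are $|V|^{O(1/\epsilon)}$ possibilities), preselect them into $\mathcal{S}$ at the start, subtract their weights from $\kb$, and delete all candidates of weight exceeding the rescaled $\epsilon\kb$. The greedy is then applied on the residual instance where the remaining candidate weights are bounded by $\epsilon\kb$, so the additive overshoot becomes at most $\epsilon\kb$ as required.
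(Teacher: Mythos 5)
Your framing matches the paper on all the routine parts: guessing $\opt$ over the $O(|V|^2)$ candidate distances, the weight bound $w(\mathcal{S})<\kb+w_{\max}$ from the loop condition, the $3\opt$ radius from the expanded disks, the containment observation ($\B(i)\cap\B(c^*)\neq\emptyset$ implies $\B(c^*)\subseteq\E(i)$), the greedy ratio inequality, and the closing partial-enumeration step, which is exactly how the paper derives Theorem~\ref{thm:robust_apx} from Theorem~\ref{thm:robust_additive}. You even correctly diagnose the central obstacle, namely that the naive recursion on uncovered mass only gives $(1-1/e)p$ coverage. But the entire content of the theorem lies in the coverage bound $|\E(\mathcal{S})|\geq p$ in the case $w(\mathcal{S})\geq\kb$, and there your proposal stops at a sketch (``should yield''), so there is a genuine gap. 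Moreover, the mechanism you sketch is false as stated: an optimal ball $\B(c^*)$ that is disjoint from every chosen disk $\B(i)$ (hence ``live'' in your sense) can still be partially covered, because an expanded disk $\E(i)=\B(i,3\opt)$ intersects $\B(c^*)$ whenever $2\opt<d(i,c^*)\leq 4\opt$ even though $\B(i)\cap\B(c^*)=\emptyset$. So a live ball does \emph{not} contribute its full cardinality to the uncovered mass, and tracking which optimal balls remain live cannot by itself close the recursion.

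What the paper actually does (proof of Theorem~\ref{thm:robust_additive}) is a single global comparison $|\E(\mathcal{S})|\geq|\B(\mathcal{O})|$ rather than an iteration-by-iteration potential argument. It orders the optimal centers by the first greedy index whose disk intersects theirs ($f(o_1)\leq\cdots\leq f(o_m)$, with a threshold $t$ separating the never-hit balls $o_t,\ldots,o_m$), partitions $\{1,\ldots,t-1\}$ into sets $\mathsf{R}(i)$ of optimal balls first hit by greedy center $i$, and reduces the claim to $\sum_{i\in\mathcal{S}}|\B(i)\setminus\B(\mathcal{O})|\geq\sum_{j=t}^{m}|\B(o_j)\setminus\E(\mathcal{S})|$. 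The greedy rule is then applied \emph{twice} per center $i$: once against the balls in $\mathsf{R}(i)$, yielding the discount factor $\bigl(1-\sum_{j\in\mathsf{R}(i)}w(o_j)/w(i)\bigr)$ on $|\B(i)\setminus\E_{<i}|$ (a factor that can be negative for individual $i$ and is only controlled in aggregate), and once against the never-hit balls, giving $|\B(i)\setminus\E_{<i}|\geq w(i)\cdot\sum_{j=t}^{m}|\B(o_j)\setminus\E_{<i}|/\sum_{j=t}^{m}w(o_j)$; this second application requires verifying that no $o_j$ with $j\geq\min\{t,l(i)\}$ was already selected before $i$, which the paper checks explicitly. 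Summing over $i\in\mathcal{S}$, the coefficient becomes $\bigl(w(\mathcal{S})-\sum_{j=1}^{t-1}w(o_j)\bigr)/\sum_{j=t}^{m}w(o_j)\geq 1$, using precisely $w(\mathcal{S})\geq\kb\geq w(\mathcal{O})$ --- this is the step that beats the $(1-1/e)$ barrier, and none of this bookkeeping appears in your proposal.
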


We present our algorithm for \rknap as Algorithm
\ref{alg:robust_knap}. We assume that $\kb<w(\V)$, since otherwise the problem is trivial.
We also set $A/0:=\infty$ for $A>0$ and $0/0:=0$, which makes line 5 work even if $w(i)=0$.
Our algorithm can be regarded as a ``weighted'' version of that of Charikar et al. \cite{charikar2001algorithms}, but the analysis is much more involved.
We next prove the following theorem, which can be used together with the partial enumeration technique to yield Theorem \ref{thm:robust_apx}.
Note that, if all clients have unit weight, Theorem \ref{thm:robust_additive} will guarantee a 3-approximate solution $\calS$ with $w(\calS)<\kb+1$, which implies $w(\calS)\leq \kb$. So it actually gives a 3-approximation without violating the constraint. Thus, our result generalizes that of Charikar et al. \cite{charikar2001algorithms}.

\begin{theorem}\label{thm:robust_additive}
Given an input of the \rknap problem, Algorithm
\ref{alg:robust_knap} returns a set $\mathcal{S}$ with
$w(\mathcal{S})<\kb+\max_{v\in \V}w(v)$ such that $cost_p(\mathcal{S})
\leq 3\opt$.
\end{theorem}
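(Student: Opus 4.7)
The statement has two parts. The weight bound $w(\mathcal{S})<\kb+\max_{v\in\V}w(v)$ follows immediately from the loop condition: the \textbf{while} loop is entered only when $w(\mathcal{S})<\kb$, so $w(\mathcal{S}_{\ell-1})<\kb$ just before the final iteration $\ell$, and hence $w(\mathcal{S})=w(\mathcal{S}_{\ell-1})+w(i_\ell)<\kb+w(i_\ell)\le\kb+\max_{v\in\V}w(v)$. The remaining work is the cost bound. Because every point of $\mathcal{C}_\ell:=\bigcup_{i\in\mathcal{S}}\E(i)$ lies within distance $3\opt$ of some chosen center, the task reduces to showing $|\mathcal{C}_\ell|\ge p$ (under the assumption, enforced by enumerating the $O(|\V|^2)$ candidate distances, that the guess of $\opt$ is correct); if the loop halts with $\mathcal{C}_\ell=\V$ this is trivial, so I will assume $W_\ell:=w(\mathcal{S})\ge\kb$ from now on.

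\medskip

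\noindent\textbf{Main reduction via greedy inequalities.} Fix an optimal solution $\{c_1,\dots,c_t\}\subseteq\V$ with $\sum_j w(c_j)\le\kb$ serving at least $p$ non-outlier clients within distance $\opt$, write $O_j=\B(c_j,\opt)$ and $R=\bigcup_j O_j$ (so $|R|\ge p$), and set $g_s=|\B(i_s)\setminus\mathcal{C}_{s-1}|$, $h_s=|\E(i_s)\setminus\mathcal{C}_{s-1}|$, $W_s=\sum_{s'\le s}w(i_{s'})$, $Y_s=|R\setminus\mathcal{C}_{s-1}|$. Note $h_s\ge g_s$ and $|\mathcal{C}_\ell|=\sum_s h_s$. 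The greedy choice of $i_s$ gives, for every $s$ and every $c_j$ (trivially when $c_j\in\mathcal{S}$, since then $O_j\subseteq\E(c_j)\subseteq\mathcal{C}_{s-1}$), the inequality $g_s\,w(c_j)\ge|O_j\setminus\mathcal{C}_{s-1}|\,w(i_s)$; summing over $j$ and using $\sum_j w(c_j)\le\kb$ together with $\sum_j|O_j\setminus\mathcal{C}_{s-1}|\ge Y_s$ yields the per-step bound $g_s\kb\ge w(i_s)Y_s$. A second essential ingredient is the geometric absorption property: whenever $\B(i_s)\cap O_j\ne\emptyset$, the triangle inequality forces $O_j\subseteq\E(i_s)\subseteq\mathcal{C}_\ell$, so every ``$\B$-touched'' OPT disk is absorbed whole into $\mathcal{C}_\ell$. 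The plan is to sum the per-step bound over $s$, apply Abel summation-by-parts to rewrite $\sum_s w(i_s)Y_s=W_\ell Y_{\ell+1}+\sum_s W_s(Y_s-Y_{s+1})$, invoke $W_\ell\ge\kb$, and combine with $|\mathcal{C}_\ell|\ge\sum_s g_s$ and the identity $|\mathcal{C}_\ell\cap R|=|R|-Y_{\ell+1}$ to drive the argument through to $|\mathcal{C}_\ell|\ge p$.

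\medskip

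\noindent\textbf{Main obstacle.} The difficulty, and the reason Charikar et al.'s pigeon-hole charging cannot simply be transplanted to the weighted setting, is that the straightforward combination of ``$\sum g_s\ge Y_{\ell+1}$'' (from the Abel step) and ``$|\mathcal{C}_\ell\cap R|=|R|-Y_{\ell+1}$'' only gives $|\mathcal{C}_\ell|\ge|R|/2\ge p/2$. Closing this factor-of-two gap requires exploiting the absorption property together with the slack cross-terms $W_s(Y_s-Y_{s+1})$ left by the Abel expansion, so that every OPT-point escaping $\mathcal{C}_\ell$ is matched to a \emph{distinct} covered non-$R$ point in $\mathcal{C}_\ell$ rather than allowing a single covered point to be charged by several OPT disks (as the unit-weight argument implicitly does). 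This careful bookkeeping over the pairwise-disjoint families $\{\B(i_s)\setminus\mathcal{C}_{s-1}\}_s$ and $\{\E(i_s)\setminus\mathcal{C}_{s-1}\}_s$ is precisely the ``more involved algebraic approach'' the paper advertises. Once $|\mathcal{C}_\ell|\ge p$ is established, $cost_p(\mathcal{S})\le 3\opt$ follows immediately from the definition of $\E(\cdot)$.
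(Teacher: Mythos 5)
Your weight bound and your setup are fine, and your per-step greedy inequality $g_s\,\kb\ge w(i_s)Y_s$ is correct (summing $g_s\,w(c_j)\ge|O_j\setminus\mathcal{C}_{s-1}|\,w(i_s)$ over $j$ is legitimate, since each $c_j$ is either still unchosen at step $s$, or already chosen so that $O_j\subseteq\mathcal{C}_{s-1}$ makes the right side zero). But the proof has a genuine, and self-acknowledged, hole at its crux: you never establish $|\E(\calS)|\ge p$. As you yourself compute, the Abel-summation route delivers only $\sum_s g_s\ge Y_{\ell+1}$, hence $2|\mathcal{C}_\ell|\ge|R|$ and $|\mathcal{C}_\ell|\ge p/2$; the final paragraph then asserts that the factor-of-two gap can be closed by ``careful bookkeeping'' over the slack cross-terms, without exhibiting that bookkeeping. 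Since the entire content of the theorem beyond the trivial weight bound is precisely this counting argument, what you have written is a plan with the main step missing, not a proof.

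The structural reason your aggregate inequality stalls at $p/2$ is instructive: $\sum_s g_s$ counts points of $\B(i_s)\setminus\mathcal{C}_{s-1}$ that may themselves lie in $R=\B(\calO)$, so when you add $|\mathcal{C}_\ell\cap R|$ these points are double-counted. The paper avoids this by never comparing each greedy center against \emph{all} of $\calO$ at once. Instead it writes $|\E(\calS)|-|\B(\calO)|\ge\sum_{i\in\calS}|\B(i)\setminus\B(\calO)|-\sum_j|\B(o_j)\setminus\E(\calS)|$ (so only points \emph{outside} OPT's region are credited), orders the optimal centers by $f(o_j)$, the first greedy center whose disk meets $\B(o_j)$, and partitions the touched ones into classes $\mathsf{R}(i)=\{j:f(o_j)=i\}$; the untouched disks ($f(o_j)=\infty$) are the only ones contributing to the loss term. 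For each $i$ the greedy ratio is invoked only against the disks in $\mathsf{R}(i)$ and the untouched disks, giving $|\B(i)\setminus\B(\calO)|\ge\bigl(1-\sum_{j\in\mathsf{R}(i)}w(o_j)/w(i)\bigr)|\B(i)\setminus\E_{<i}|$ and then, via $\max_j A_j/B_j\ge\sum_j A_j/\sum_j B_j$, a lower bound proportional to $\sum_{j\ge t}|\B(o_j)\setminus\E(\calS)|$. Summing over $i$, the partition property makes the coefficient equal $\bigl(w(\calS)-\sum_{j<t}w(o_j)\bigr)/\sum_{j\ge t}w(o_j)\ge1$, using exactly $w(\calS)\ge\kb\ge w(\calO)$. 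This localized charging (each optimal disk charged once, to the greedy disk that first touches it) is the ingredient your global Abel argument lacks, and it is not a routine refinement of your inequalities: it requires replacing $Y_s$ and $g_s$ by the $\mathsf{R}(i)$-decomposed quantities from the start.
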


\begin{proof}
We call $\B(v)$
the disk of $v$ and $\E(v)$ the expanded disk of $v$.
Assume w.l.o.g. that the algorithm returns
$\calS=\{1,2,\ldots,q\}$ where $q=|\calS|$, and that the centers are
chosen in the order $1, 2, \ldots, q$.
We first observe that $\B(1),\ldots,\B({q})$ are pairwise
disjoint, which can be seen as follows.
By standard use of
the triangle inequality, we have $\B(i) \subseteq \E(j)$ and $\B(j)
\subseteq \E(i)$ for any $i,j\in \V$ such that $\B(i) \cap \B(j)
\neq \emptyset$. Therefore, if there exists $1\leq i<j\leq q$ such
that $\B(j) \cap \B(i) \neq \emptyset$, then all points in $\B(j)$
are marked ``covered'' when choosing $i$, and hence choosing $j$
cannot cover any more point, contradicting with the way in which the centers
are chosen (note that the algorithm terminates when all points have
been covered). So the $q$ disks $\B(1),\ldots,\B({q})$ are pairwise
disjoint.

For ease of notation, let $\B({\V'}) := \bigcup_{v\in \V'}\B(v)$
and $\E(\V'):=\bigcup_{v\in \V'}\E(v)$ for $\V' \subseteq \V$.
By the condition of the WHILE loop,
$w(\{1,\ldots,q-1\})<\kb$, and thus $w(\mathcal{S})<\kb+w(q)\leq
\kb+\max_{v\in \V}w(v)$. It remains to prove
$cost_p(\mathcal{S})\leq 3\opt$. Note that this clearly holds if the
expanded disks $\E(1), \ldots, \E({q})$ together cover at least $p$
points.
Thus, it suffices to show that $|\E(\mathcal{S})|\geq p$. If
$w(S)<\kb$, then all points in $\V$ are covered by $\E({\calS})$
due to the termination condition of the WHILE loop, and thus
$|\E({\calS})|=|\V|\geq p$. In the rest of the proof, we deal with
the case $w(\calS)\geq \kb$.

For each $v \in \V$, let $f(v)$ be the minimum $i\in \mathcal{S}$
such that $\B(v) \cap \B(i) \neq \emptyset$; let $f(v)=\infty$ if no
such $i$ exists (i.e., if disk $\B(v)$ is disjoint from all disks
centered in $\mathcal{S}$).
Suppose
$\mathcal{O}=\{o_1,o_2,\ldots,o_m\}$ is an optimal solution, in
which the centers are ordered such that $f(o_1)\leq
\cdots \leq f(o_m)$.
Since the optimal solution is also feasible, we have $|\B({\mathcal{O}})|\geq p$.
Hence, to prove $|\E({\mathcal{S}})|\geq p$, we only need to show $|\E({\mathcal{S}})| \geq |\B({\mathcal{O}})|$. For any sets
$A$ and $B$, we have $|A| = |A\setminus B| + |A \cap B|$. Therefore,
\begin{eqnarray}
& &|\E({\mathcal{S}})|-|\B({\mathcal{O}})| \notag\\
&=& (|\E({\mathcal{S}})\setminus \B({\mathcal{O}})|+|\E({\mathcal{S}}) \cap \B({\mathcal{O}})|) - (|\B({\mathcal{O}})\setminus \E({\mathcal{S}})|+|\E({\mathcal{S}}) \cap \B({\mathcal{O}})|)\notag\\
&=& |\E({\mathcal{S}}) \setminus \B({\mathcal{O}})| - |\B({\mathcal{O}}) \setminus \E({\mathcal{S}})| \notag\\
&\geq& |\B({\mathcal{S}}) \setminus \B({\mathcal{O}})| - |\B({\mathcal{O}}) \setminus \E({\mathcal{S}})| \textrm{~~~(because~}\B({\mathcal{S}}) \subseteq \E({\mathcal{S}})).
\end{eqnarray}
As $\B(1),\ldots,\B(q)$ are pairwise disjoint,
$$|\B({\mathcal{S}}) \setminus \B({\mathcal{O}})|=|\cup_{i\in \mathcal{S}}(\B(i) \setminus \B({\mathcal{O}}))|=\sum_{i\in \mathcal{S}}|\B(i) \setminus \B({\mathcal{O}})|,$$ and
$$|\B({\mathcal{O}}) \setminus \E({\mathcal{S}})|=|\cup_{j=1}^{m}(\B({o_j}) \setminus \E({\mathcal{S}}))|\leq \sum_{j=1}^{m}|\B({o_j}) \setminus \E({\mathcal{S}})|.$$ Thus,
\begin{equation}\label{equ:goal}
|\E({\mathcal{S}})|-|\B({\mathcal{O}})| \geq \sum_{i\in \mathcal{S}}|\B(i) \setminus \B({\mathcal{O}})| - \sum_{j=1}^{m}|\B(o_j) \setminus \E(\mathcal{S})|.
\end{equation}

{ \centering
\begin{tikzpicture}[scale=2/3]
\coordinate[label=right:$1$] (A1) at (0, 0);
\coordinate[label=right:$2$] (A2) at (3.5, 0);
\coordinate[label=right:$3$] (A3) at (5.7, 0.7); \fill (A1) circle
(1pt); \draw (A1) circle (1); \fill (A2) circle (1pt); \draw (A2)
circle (1); \fill (A3) circle (1pt); \draw (A3) circle (1); \draw
(A1) [dashed] circle (3); \draw (A2) [dashed] circle (3); \draw (A3)
[dashed] circle (3); \node at (-2.5,1.0) {$\mathsf{E}_1$}; \node at
(2.9,2.9) {$\mathsf{E}_2$}; \node at (6.4,3.2) {$\mathsf{E}_3$};

\coordinate[label=right:$o_1$] (O1) at (-1, -1.5);
\coordinate[label=right:$o_2$] (O2) at (0.6, -1.7);
\coordinate[label=right:$o_3$] (O3) at (2.6, -1);
\coordinate[label=right:$o_4$] (O4) at (4.8, -0.8);
\coordinate[label=right:$o_5$] (O5) at (6.5, -2.2);
\coordinate[label=right:$o_6$] (O6) at (8.4, -1.2);

\fill (O1) circle (1pt); \draw (O1) circle (1); \fill (O2) circle
(1pt); \draw (O2) circle (1); \fill (O3) circle (1pt); \draw (O3)
circle (1); \fill (O4) circle (1pt); \draw (O4) circle (1); \fill
(O5) circle (1pt); \draw (O5) circle (1); \fill (O6) circle (1pt);
\draw (O6) circle (1);

\end{tikzpicture}

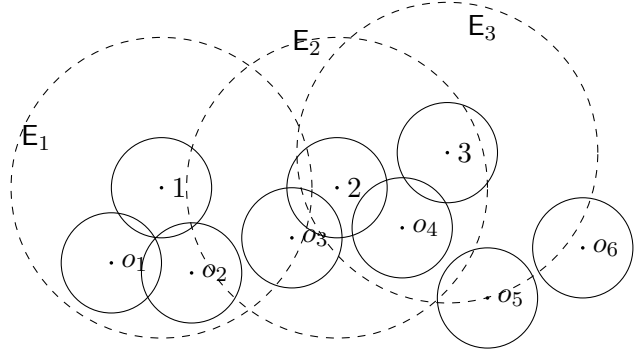
\captionof{figure}{\small An example of the algorithm for {\sf
Robust-KnapCenter}. The algorithm returns $\mathcal{S}=\{1,2,3\}$,
and the optimal solution opens $\{o_1,o_2,\ldots,o_6\}$. Disks and
extended disks are represented by (small) circles and (large) dashed
circles, respectively. In this case, we have $f(o_1)=f(o_2)=1,
f(o_3)=f(o_4)=2$, $f(o_5)=f(o_6)=\infty$, and thus $t=5$. Then,
$\mathsf{R}(1)=\{1,2\}, \mathsf{R}(2)=\{3,4\}$, and
$\mathsf{R}(3)=\emptyset$. \normalsize}\label{fig:robustknap}
\vspace{4mm}
}

Let $t$ be the unique integer in $\{1,\ldots,m+1\}$ such
that $f(o_j)\leq |\mathcal{S}|$ for all $1\leq j\leq t-1$ and
$f(o_j)=\infty$ for all $t\leq j\leq m$.
(That is, each disk $\B(o_j)$
($1\leq j\leq t-1$) intersects with $\B(i)$ for some $i\in
\mathcal{S}$, while the remaining $\B(o_{t}),\ldots, \B(o_m)$ are
disjoint from all the disks of points in $\mathcal{S}$. Such $t$
exists because $f(o_1)\leq \cdots \leq f(o_m)$. See Figure
\ref{fig:robustknap} for an example.)
Then,
for all $1\leq j\leq t-1$, we have $\B(o_j) \cap \B(f(o_j)) \neq
\emptyset$, and thus $\B(o_j) \subseteq \E(f(o_j)) \subseteq
\E(\mathcal{S})$, implying that $|\B(o_j) \setminus
\E(\mathcal{S})|=0$ for all $1\leq j\leq t-1$. Combining with the
inequality (\ref{equ:goal}), we have
\begin{equation}\label{equ:goal2}
|\E(\mathcal{S})|-|\B(\mathcal{O})| \geq \sum_{i\in \mathcal{S}}|\B(i)\setminus \B(\mathcal{O})| - \sum_{j=t}^{m}|\B(o_j) \setminus \E(\mathcal{S})|.
\end{equation}
Hence, it suffices to prove that
\begin{equation}\label{equ:goal3}
\sum_{i\in \mathcal{S}}|\B(i) \setminus \B(\mathcal{O})| - \sum_{j=t}^{m}|\B(o_j) \setminus \E(\mathcal{S})| \geq 0.
\end{equation}

The inequality is trivial when $t=m+1$. Thus, we assume in what follows that $t\leq m$, i.e., $\B(o_m)$ is disjoint from $\B(1),\B(2),\ldots,\B(q)$.
Before proving (\ref{equ:goal3}), we introduce some notations.
For each $i \in \mathcal{S}$, define
$\mathsf{R}(i) := \{j~|~1\leq j\leq m; f(o_j)=i\},$
and let $l(i):=\min\{j \mid j\in \mathsf{R}(i)\}$ and
$q(i):= \max \{j \mid j\in \mathsf{R}(i)\}$
be the minimum index and maximum index in $\mathsf{R}(i)$, respectively
(let $l(i)=q(i)=\infty$ if $\mathsf{R}(i)=\emptyset$).
By the definitions of $f(\cdot)$ and $t$, each $\mathsf{R}(i)$ is a set of consecutive integers (or empty), and $\{\mathsf{R}(i)\}_{i\in \mathcal{S}}$ forms a partition of $\{1,2,\ldots,t-1\}$.
Also, $q(i)=l(i+1)-1$ if $l(i+1)\neq \infty$. See Figure \ref{fig:robustknap} for an illustration of the notations.

Consider an arbitrary $i\in \mathcal{S}$. For each $j$ such that $l(i+1)\leq j\leq t-1$, we know that $j \in \mathsf{R}(i')$ for some $i'>i$, i.e., $f(o_j)=i'>i$, and thus $\B(o_j) \cap \B(i) = \emptyset$. By the definition of $t$, we also have $\B(o_j) \cap \B(i) = \emptyset$ for all $t\leq j\leq m$. Therefore,
\begin{equation}\label{equ:disjoint}
\B(o_j) \cap \B(i) = \emptyset \textrm{~for all~}j \textrm{~s.t.~}\min\{t,l(i+1)\}\leq j\leq m.
\end{equation}
(Here we take the minimum of $l(i+1)$ and $t$ because $l(i+1)$ may be $\infty$.)

We next try to lower-bound $|\B(i) \setminus \B(\mathcal{O})|$ in order to establish (\ref{equ:goal3}).
Equality (\ref{equ:disjoint}) tells us that $\B(o_j) \cap \B(i) \neq \emptyset$ implies $j\in \mathsf{R}(1) \cup \cdots \cup \mathsf{R}(i)$.
In consequence,
\begin{equation}\label{equ:t1}
\B(i) \setminus \B(\mathcal{O}) = \B(i) \setminus \cup_{j=1}^{m}\B({o_j}) = \B(i) \setminus \cup_{j\in \mathsf{R}(1) \cup \cdots \cup \mathsf{R}(i)}\B({o_j}).
\end{equation}
For each $j \in \mathsf{R}({i'})$ with $1\leq i'\leq i-1$, $\B({o_j}) \cap \B({i'}) \neq \emptyset$, and thus $\B(o_j) \subseteq \E({i'}) \subseteq \E(\{1,2,\ldots,i-1\})$. For convenience, define $\E_{<i}:=\E({\{1,2,\ldots,i-1\}})$. Then, from (\ref{equ:t1}) we get $\B(i) \setminus \B({\mathcal{O}}) \supseteq \B(i) \setminus (\E_{<i} \cup \bigcup_{j\in \mathsf{R}(i)}\B({o_j}))$, and hence
\begin{eqnarray}\label{equ:t2}
& &|\B(i) \setminus \B({\mathcal{O}})| \geq |\B(i) \setminus (\E_{<i} \cup \bigcup_{j\in \mathsf{R}(i)}\B(o_j))| \notag\\
&=& |\B(i) \setminus (\E_{<i} \cup \bigcup_{j\in \mathsf{R}(i)}(\B({o_j}) \setminus \E_{<i}))| \textrm{~~(because}~ A \cup \bigcup_i B_i = A \cup \bigcup_i(B_i \setminus A)) \notag\\
&=&|(\B(i) \setminus \E_{<i}) \setminus \bigcup_{j\in \mathsf{R}(i)}(\B({o_j}) \setminus \E_{<i})|
\geq |\B(i) \setminus \E_{<i}| - \sum_{j\in \mathsf{R}(i)}|\B({o_j}) \setminus \E_{<i}|.
\end{eqnarray}

Now consider the particular execution of line 5 in which $i$ is chosen and added to $\calS$.
Note that (\ref{equ:disjoint}) holds for all $i\in \mathcal{S}$. Thus, for all $1\leq i' \leq i-1$ and $\min\{t,l(i'+1)\}\leq j\leq m$,
$\B({o_j})$ is disjoint from $\B({i'})$, which in particular implies $o_j
\not\in \B({i'})$. By considering all $i'\in\{1,\ldots,i-1\}$ and
noting that $l(i)\geq l(i'+1)$, we have $o_j \not\in
\B(\{1,2,\ldots,i-1\})$ for all $\min\{t,l(i)\}\leq j\leq m$. This
further indicates that $\{1,2,\ldots,i-1\} \cap
\{o_j~|~\min\{t,l(i)\}\leq j\leq m\} = \emptyset.$ Recall that
$1,2,\ldots,i-1$ are all the points added to $\mathcal{S}$ before
$i$. Therefore, no point in $\{o_j~|~\min\{t,l(i)\}\leq j\leq m\}$
was chosen before $i$. By our way of choosing centers (see
line 5), we have
\begin{equation}\label{equ:choose}
\frac{|\B(i) \setminus \E_{<i}|}{w(i)} \geq \frac{|\B({o_j}) \setminus \E_{<i}|}{w(o_j)} \textrm{~for all~}j \textrm{~s.t.~}\min\{t,l(i)\}\leq j\leq m.
\end{equation}
Hence, for all $j\in \mathsf{R}(i)$,
$$|\B({o_j}) \setminus \E_{<i}|\leq \frac{w(o_j)}{w(i)}|\B(i) \setminus \E_{<i}|.$$
Substituting the above inequality into (\ref{equ:t2}) gives
\begin{eqnarray}\label{equ:t3}
|\B(i) \setminus \B({\mathcal{O}})| &\geq& |\B(i) \setminus \E_{<i}| - \sum_{j\in \mathsf{R}(i)}\frac{w(o_j)}{w(i)}|\B(i) \setminus \E_{<i}| \notag \\
&=&\left(1-\frac{\sum_{j\in \mathsf{R}(i)}w(o_j)}{w(i)}\right)|\B(i) \setminus \E_{<i}|.
\end{eqnarray}
\eat{
\begin{equation}\label{equ:t3}
|\B(i) \setminus \B({\mathcal{O}})| \geq \left(1-  \sum_{j\in \mathsf{R}(i)}\frac{w(o_j)}{w(i)}\right)|\B(i) \setminus \E_{<i}|.
\end{equation}
}
By (\ref{equ:choose}) we also have
\[
|\B(i) \setminus \E_{<i}| \geq w(i) \cdot \max_{t\leq j\leq m}\frac{|\B({o_j}) \setminus \E_{<i}|}{w(o_j)}\geq w(i) \cdot\frac{\sum_{j=t}^{m}|\B({o_j}) \setminus \E_{<i}|}{\sum_{j=t}^{m}w(o_j)},
\]
where we use the inequality $\max_j\frac{A_j}{B_j}\geq \frac{\sum_j
A_j}{\sum_j B_j}$ when $B_j\geq 0$ for all $j$. Plugging this
inequality into (\ref{equ:t3}) and noting that $\E_{<i} \subseteq
\E({\mathcal{S}})$, we obtain:
\begin{eqnarray}\label{equ:t4}
|\B(i) \setminus \B({\mathcal{O}})| &\geq& \left(1-\frac{\sum_{j\in \mathsf{R}(i)}w(o_j)}{w(i)}\right)w(i) \cdot\frac{\sum_{j=t}^{m}|\B({o_j}) \setminus \E_{<i}|}{\sum_{j=t}^{m}w(o_j)}\notag \\
&=&\frac{w(i)-\sum_{j\in \mathsf{R}(i)}w(o_j)}{\sum_{j=t}^{m}w(o_j)}\cdot \sum_{j=t}^m|\B({o_j}) \setminus \E_{<i}|\notag\\
&\geq&\frac{w(i)-\sum_{j\in \mathsf{R}(i)}w(o_j)}{\sum_{j=t}^{m}w(o_j)}\cdot \sum_{j=t}^m|\B({o_j}) \setminus \E({\mathcal{S}})|.
\end{eqnarray}
Applying (\ref{equ:t4}) for all $i\in \mathcal{S}$ and summing the resulting inequalities up, we get
\begin{eqnarray}\label{equ:t5}
& &\sum_{i\in \mathcal{S}}|\B(i) \setminus \B({\mathcal{O}})|
\geq \frac{\sum_{i\in \mathcal{S}}(w(i)-\sum_{j\in \mathsf{R}(i)}w(o_j))}{\sum_{j=t}^{m}w(o_j)}\cdot \sum_{j=t}^m|\B({o_j}) \setminus \E({\mathcal{S}})| \notag\\
&=& \frac{\sum_{i\in \mathcal{S}}w(i) - \sum_{i\in \mathcal{S}}\sum_{j\in \mathsf{R}(i)}w(o_j)}{\sum_{j=t}^{m}w(o_j)}\cdot \sum_{j=t}^m|\B({o_j}) \setminus \E({\mathcal{S}})| \notag\\
&=& \frac{w(\mathcal{S}) - \sum_{j=1}^{t-1}w(o_j)}{\sum_{j=t}^{m}w(o_j)}\cdot \sum_{j=t}^m|\B({o_j}) \setminus \E({\mathcal{S}})|,
\end{eqnarray}
where the last equality holds because $\{\mathsf{R}(i)\}_{i\in
\mathcal{S}}$ is a partition of $\{1,2,\ldots,t-1\}$.

Recall that we
are dealing with the case of $w(\mathcal{S}) \geq \kb$. Since
$\mathcal{O}$ is an optimal solution meeting the weight constraint,
$w(\mathcal{O})=\sum_{j=1}^{m}w(o_j) \leq \kb \leq w(\mathcal{S})$.
Therefore, by (\ref{equ:t5}) we have
\begin{eqnarray*}
\sum_{i\in \mathcal{S}}|\B(i) \setminus \B(\mathcal{O})| \geq
\frac{\sum_{j=1}^{m}w(o_j) - \sum_{j=1}^{t-1}w(o_j)}{\sum_{j=t}^{m}w(o_j)}\cdot \sum_{j=t}^m|\B(o_j) \setminus \E(\mathcal{S})| = \sum_{j=t}^m|\B(o_j) \setminus \E(\mathcal{S})|,
\end{eqnarray*}
which immediately gives (\ref{equ:goal3}). This completes the proof
of Theorem \ref{thm:robust_additive}.
\end{proof}

At the end of this section, we prove Theorem \ref{thm:robust_apx} using Theorem~\ref{thm:robust_additive} and the partial enumeration technique.
Fix a parameter $\epsilon>0$. Given an instance $\calI$ of \rknap, call a point $v\in V$ \emph{heavy} if $w(v) \geq \epsilon\cdot\kb$. Let $\calO \subseteq \V$ be the set of centers taken by the optimal solution of $\calI$ (without violating the knapsack constraint), and $\calH$ be the set of heavy centers in $\calO$. Let $\opt$ denote the optimum objective value.
Clearly, $|\calH| \leq \kb/(\epsilon\cdot\kb)=1/\epsilon$. We guess the elements of $\calH$ by trying all possible cases (at most $|V|^{1/\epsilon}=|V|^{O(1)}$ possibilities) and using the best solution. We then construct a new instance $\calI'$ of \rknap as follows: the metric space is the same as that of $\calI$, the weight function $w'$ is defined as $w'(v)=0$ for $v \in \calH$ and $w'(v)=w(v)$ for $v\in \V \setminus \calH$, and the weight budget is $\kb' = \kb - w(\calH)$. It is easy to see that opening $\calO$ in $\calI'$ gives a feasible solution of cost $\opt$. Note that the maximum weight of any point in $\calI'$ is at most $\epsilon \cdot \kb$. Hence, by Theorem \ref{thm:robust_additive}, we can find in polynomial time a solution $\calS$ such that $cost(\calS)\leq 3\opt$ and $w'(\calS)<\kb - w(\calH) + \epsilon \cdot \kb$. We use $\calS$ as our solution to the original instance $\calI$. Then, $cost(\calS) \leq 3\opt$ and $w(\calS)\leq w'(\calS) + w(\calH)<(1+\epsilon)\kb$. The proof is complete.

\section{Concluding Remarks and Open Problems}
We gave a 3-approximation algorithm for \mcenter\, and the best known inapproximability bound is $2-\epsilon$.
For \rmcenter, we give a $7$-approximation while the current best known lower bound is $3-\epsilon$ due to the hardness of robust
$k$-center with forbidden centers \cite{charikar2001algorithms}. It would be interesting to close these gaps.
(Note that \mcenter\ includes as a special case the $k$-center problem with forbidden centers, i.e., some points are not allowed to be chosen as centers. It is known that another generalization of the latter, namely the $k$-supplier problem, is \NP-hard to approximate within $3-\epsilon$ \cite{jacm86}.)
For \rknap, it is interesting to explore whether constant factor
approximation exists while not violating the knapsack constraint.
It is also open whether there is a constant factor approximation
for the demand version (even for the unit-weight case).
Finally,
extending our results for \rknap to the multi-constraint case seems intriguing
and may require essentially different ideas.

\bibliographystyle{abbrv}
\bibliography{refLib}

\begin{thebibliography}{10}

\bibitem{aggarwal2006achieving}
G.~Aggarwal, T.~Feder, K.~Kenthapadi, S.~Khuller, R.~Panigrahy, D.~Thomas, and
  A.~Zhu.
\newblock Achieving anonymity via clustering.
\newblock In {\em PODS}, pages 153--162, 2006.

\bibitem{charikar2001algorithms}
M.~Charikar, S.~Khuller, D.~Mount, and G.~Narasimhan.
\newblock Algorithms for facility location problems with outliers.
\newblock In {\em SODA}, pages 642--651, 2001.

\bibitem{charikar2011dependent}
M.~Charikar and S.~Li.
\newblock A dependent {LP}-rounding approach for the $k$-median problem.
\newblock In {\em ICALP}, pages 194--205, 2012.

\bibitem{conf/sirocco/ChechikP12}
S.~Chechik and D.~Peleg.
\newblock The fault tolerant capacitated $k$-center problem.
\newblock In {\em SIROCCO}, pages 13--24, 2012.

\bibitem{ref:ChenEfkCenter11}
D.~Chen and H.~Wang.
\newblock Efficient algorithms for the weighted {$k$}-center problem on a real
  line.
\newblock In {\em ISAAC}, pages 584--593, 2011.

\bibitem{conf/soda/Chen08}
K.~Chen.
\newblock A constant factor approximation algorithm for $k$-median clustering
  with outliers.
\newblock In {\em SODA}, pages 826--835, 2008.

\bibitem{journals/jacm/ChuzhoyGHKKKN05}
J.~Chuzhoy, S.~Guha, E.~Halperin, S.~Khanna, G.~Kortsarz, R.~Krauthgamer, and
  J.~Naor.
\newblock Asymmetric $k$-center is $\log^{*}n$-hard to approximate.
\newblock {\em J. ACM}, 52(4):538--551, 2005.

\bibitem{ref:ColeSl87}
R.~Cole.
\newblock Slowing down sorting networks to obtain faster sorting algorithms.
\newblock {\em J. ACM}, 34(1):200--208, 1987.

\bibitem{conf/focs/CyganHK12}
M.~Cygan, M.~Hajiaghayi, and S.~Khuller.
\newblock {LP} rounding for $k$-centers with non-uniform hard capacities.
\newblock In {\em FOCS}, pages 273--282, 2012.

\bibitem{journals/jct/EF70}
J.~Edmonds and D.~Fulkerson.
\newblock Bottleneck extrema.
\newblock {\em J. Combin. Theory}, 8(3):299--306, 1970.

\bibitem{ref:FredericksonPa91}
G.~Frederickson.
\newblock Parametric search and locating supply centers in trees.
\newblock In {\em WADS}, pages 299--319, 1991.

\bibitem{book_npc}
M.~Garey and D.~Johnson.
\newblock {\em Computers and Intractability: A Guide to the Theory of
  NP-Completeness}.
\newblock W. H. Freeman, 1979.

\bibitem{conf/fsttcs/GolovinGKT08}
D.~Golovin, A.~Gupta, A.~Kumar, and K.~Tangwongsan.
\newblock All-norms and all-${L}_p$-norms approximation algorithms.
\newblock In {\em FSTTCS}, pages 199--210, 2008.

\bibitem{gonzalez1985clustering}
T.~Gonzalez.
\newblock Clustering to minimize the maximum intercluster distance.
\newblock {\em Theor. Comput. Sci.}, 38:293--306, 1985.

\bibitem{conf/esa/GrandoniZ10}
F.~Grandoni and R.~Zenklusen.
\newblock Approximation schemes for multi-budgeted independence systems.
\newblock In {\em ESA}, pages 536--548, 2010.

\bibitem{hajiaghayi2011budgeted}
M.~Hajiaghayi, R.~Khandekar, and G.~Kortsarz.
\newblock Budgeted red-blue median and its generalizations.
\newblock In {\em ESA}, pages 314--325, 2011.

\bibitem{hochbaum1985best}
D.~Hochbaum and D.~Shmoys.
\newblock A best possible heuristic for the $k$-center problem.
\newblock {\em Math. Oper. Res.}, pages 180--184, 1985.

\bibitem{jacm86}
D.~Hochbaum and D.~Shmoys.
\newblock A unified approach to approximation algorithms for bottleneck
  problems.
\newblock {\em J. ACM}, 33(3):533--550, 1986.

\bibitem{books/knap04}
H.~Kellerer, U.~Pferschy, and D.~Pisinger.
\newblock {\em Knapsack Problems}.
\newblock Springer Verlag, 2004.

\bibitem{khuller1997fault}
S.~Khuller, R.~Pless, and Y.~Sussmann.
\newblock Fault tolerant $k$-center problems.
\newblock {\em Theor. Comput. Sci.}, 242(1--2):237--245, 2000.

\bibitem{conf/approx/KhullerSS12}
S.~Khuller, B.~Saha, and K.~K. Sarpatwar.
\newblock New approximation results for resource replication problems.
\newblock In {\em APPROX-RANDOM}, volume 7408 of {\em LNCS}, pages 218--230.
  Springer, 2012.

\bibitem{krishnaswamy2011matroid}
R.~Krishnaswamy, A.~Kumar, V.~Nagarajan, Y.~Sabharwal, and B.~Saha.
\newblock The matroid median problem.
\newblock In {\em SODA}, 2011.

\bibitem{conf/soda/Kumar12}
A.~Kumar.
\newblock Constant factor approximation algorithm for the knapsack median
  problem.
\newblock In {\em SODA}, pages 824--832, 2012.

\bibitem{conf/stoc/LeeMNS09}
J.~Lee, V.~S. Mirrokni, V.~Nagarajan, and M.~Sviridenko.
\newblock Non-monotone submodular maximization under matroid and knapsack
  constraints.
\newblock In {\em STOC}, pages 323--332, 2009.

\bibitem{li2010clustering}
J.~Li, K.~Yi, and Q.~Zhang.
\newblock Clustering with diversity.
\newblock In {\em ICALP}, pages 188--200, 2010.

\bibitem{matthew2008streaming}
R.~McCutchen and S.~Khuller.
\newblock Streaming algorithms for $k$-center clustering with outliers and with
  anonymity.
\newblock In {\em RANDOM}, pages 165--178, 2008.

\bibitem{conf/focs/PapadimitriouY00}
C.~H. Papadimitriou and M.~Yannakakis.
\newblock On the approximability of trade-offs and optimal access of {W}eb
  sources.
\newblock In {\em FOCS}, pages 86--92, 2000.

\bibitem{phdthesis/Pisinger95}
D.~Pisinger.
\newblock {\em Algorithms for Knapsack Problems}.
\newblock PhD thesis, Department of Computer Science, University of Copenhagen,
  1995.

\bibitem{schrijver03}
A.~Schrijver.
\newblock {\em Combinatorial Optimization: Polyhedra and Efficiency}.
\newblock Springer-Verlag, Berlin, 2003.

\bibitem{conf/stoc/VondrakCZ11}
J.~Vondr{\'a}k, C.~Chekuri, and R.~Zenklusen.
\newblock Submodular function maximization via the multilinear relaxation and
  contention resolution schemes.
\newblock In {\em STOC}, pages 783--792, 2011.

\bibitem{zarrabi2009streaming}
H.~Zarrabi-Zadeh and A.~Mukhopadhyay.
\newblock Streaming 1-center with outliers in high dimensions.
\newblock In {\em CCCG}, pages 83--86, 2009.

\bibitem{conf/soda/Zenklusen12}
R.~Zenklusen.
\newblock Matroidal degree-bounded minimum spanning trees.
\newblock In {\em SODA}, pages 1512--1521, 2012.

\end{thebibliography}

\appendix

\section{Proof of Lemma~\ref{lem:group}}\label{apx:group}
Let $\calI = (\{\calS_i\}_{1\leq i\leq n}, \{w_j,\kb_j\}_{1\leq
j\leq m})$ be an instance of the \groupknap problem, for which there
exists a solution satisfying all the weight constraints; we will
call such a solution \emph{good}. Let $\calS =
\bigcup_{i=1}^{n}\calS_i$ and $w_{\max}=\max_{v\in \calS; 1\leq
j\leq m}w_j(v)$. When $m=1$, we can simply choose from each
$\calS_i$ the element $v\in \calS_i$ with the smallest $w_1(v)$. In what
follows, we assume $m\geq 2$. If there exists $1\leq j\leq m$ such
that $w_{\max}>\kb_j$, then the element having the weight $w_{\max}$
cannot appear in any good solution, and we will modify the instance
by removing it from $\calS$. Hence, we also assume that $w_{\max}
\leq \kb_j$ for all $1\leq j\leq m$.

We apply the scaling technique that has been widely used in the design of PTASs for knapsack-like problems.
Fix $\epsilon>0$, and define $A:=\epsilon \cdot w_{\max}/n$. For each $v\in \calS$, define
$$w'_j(v) = \lfloor w_j(v)/A \rfloor \textrm{~for all~}1\leq j\leq m-1, \textrm{~and~}w'_m(v)=w_m(v).$$
Also define
$$\kb'_j=\min\{\lfloor \kb_j / A \rfloor, \lfloor n^2/\epsilon \rfloor \} \textrm{~for all~} 1\leq j\leq m-1, \textrm{~and~} \kb'_{m}=\kb_m.$$
(The choice of the ``special'' index $m$ can be arbitrary; it
indicates the constraint that we wish to satisfy.) We have $w'_j(v)
\in \{0,1,\ldots, K\}$ for all $1\leq j\leq m-1$ and $v\in\calS$,
where $K = \lfloor w_{\max}/A\rfloor = \lfloor n/\epsilon \rfloor$.
Create a new instance $\calI'=(\{\calS_i\}_{1\leq i\leq n},
\{w'_j,\kb'_j\}_{1\leq j\leq m}))$. For the original instance
$\calI$, we know that there exists a good solution $\calT \subseteq
\calS$. Using the inequality $\lfloor a\rfloor + \lfloor b \rfloor
\leq \lfloor a+b\rfloor$, we obtain that for each $1\leq j\leq m-1$,
\begin{eqnarray*}
w'_j(\calT)&=&\sum_{v\in\calT}\lfloor w_j(v)/A\rfloor\\
&\leq& \min\{\lfloor \sum_{v\in\calT}w_j(v)/A \rfloor, n\cdot \lfloor n/\epsilon \rfloor\}  \\
&\leq& \min\{\lfloor \kb_j/A\rfloor, \lfloor n^2/\epsilon \rfloor\} \\
&=& \kb'_j.
\end{eqnarray*}
Also, $w'_m(\calT)=w_m(\calT)\leq \kb_m=\kb'_m$. Therefore, $\calT$
is also a good solution of $\calI'$.
For $i\in\{1,2,\ldots,n\}$, a subset $\calT \subseteq \calS$ is
called \emph{$i$-standard} if $\calT$ consists of exactly one
element from each of the $i$ sets $\calS_1, \calS_2, \ldots,
\calS_i$. Thus a solution of $\calI'$ is just an $n$-standard
subset, and vice versa. For each tuple $(i, p_1, p_2, \ldots,
p_{m-1})$ where $i\in\{1,\ldots,n\}$ and $(\forall 1\leq j\leq
m-1)p_j \in \{0,1,\ldots,\kb'_j\}$, let
$F(i,p_1,p_2,\ldots,p_{m-1})$ denote the minimum possible value of
$p_m$ for which there exists an $i$-standard subset $\calT$ such
that $w_j(\calT)\leq p_j$ for all $1\leq j\leq m$, and let $\calT(i,
p_1, p_2, \ldots, p_{m-1})$ be an (arbitrary) such $i$-standard
subset. If such $p_m$ does not exist, then we let
$F(i,p_1,p_2,\ldots,p_{m-1}) = \infty$ and $\calT(i, p_1, p_2,
\ldots, p_{m-1})=\emptyset$. Since $\calI'$ admits a good solution,
it is easy to see that
$$F(n,\kb'_1,\kb'_2,\ldots,\kb'_{m-1}) \leq \kb'_m.$$
Our goal is thus to find $\calT(n,\kb'_1,\ldots,\kb'_{m-1})$. Note
that the number of tuples $(i,p_1,\ldots,p_{m-1})$ is at most $n
\cdot \prod_{j=1}^{m-1}\kb'_j \leq n(n^2/\epsilon)^{m-1}=n^{O(1)}$,
since $m$ and $\epsilon$ are both constants.

We now compute all $F(i,p_1,p_2,\ldots,p_{m-1})$ and find the corresponding $i$-standard subsets by dynamic programming.
The base case is $i=1$. For each tuple $(1, p_1, p_2, \ldots, p_{m-1})$, let $\mathcal{R}=\{v\in \calS_1~|~(\forall 1\leq j\leq m-1)w_j(v)\leq p_j\}$.
If $\mathcal{R}\neq \emptyset$, then clearly $F(1,p_1,\ldots,p_{m-1})=\min_{v\in \mathcal{R}}w_m(v)$, and we set $\calT(1,p_1,\ldots,p_{m-1})$ to be the vertex $v\in\mathcal{R}$ that achieves the minimum $w_m(v)$. If $\mathcal{R}=\emptyset$, then $F(1,p_1,\ldots,p_{m-1})=\infty$ and $\calT(1,p_1,\ldots,p_{m-1})=\emptyset$.

Next we derive the transition function for computing $F(i,p_1,p_2,\ldots,p_{m-1})$ for $i\geq 2$. We enumerate all possible $v\in \calS_i$ that may belong to $\calT(i,p_1,\ldots,p_{m-1})$. Then, it is easy to see that
$$F(i,p_1,\ldots,p_{m-1})=\min_{v\in \calS_i}\{w_m(v)+F(i-1,p_1-w_1(v),p_2-w_2(v),\ldots,p_{m-1}-w_{m-1}(v))\}.$$
(We assume $F(i',p'_1,\ldots,p'_{m-1})=\infty$ if $p'_j<0$ for some $j$.)

If $F(i,p_1,\ldots,p_{m-1})=\infty$, then we let $\calT(i,p_1,\ldots,p_{m-1})=\emptyset$; otherwise, assuming the minimum value is attained at $v\in\mathcal{S}_i$, we set
$$\calT(i,p_1,\ldots,p_{m-1}) = \{v\} \cup \calT(i-1,p_1-w_1(v),\ldots,p_{m-1}-w_{m-1}(v)).$$

In this way, we can correctly compute the values of every $F(i,p_1,\ldots,p_{m-1})$ and find the set $\calT(i,p_1,\ldots,p_{m-1})$ witnessing the value. Since there are only $n^{O(1)}$ tuples and the time spent on each tuple is polynomial in the number of elements, the computation can be done in polynomial time.

As argued before, $\calT = \calT(n,\kb'_1,\kb'_2,\ldots,\kb'_{m-1})$
is a good solution to $\calI'$, provided that the original instance
$\calI$ has a good solution. Now we take $\calT$ as our solution to
$\calI$. (We note that, if the original instance $\calI$ is not
guaranteed to have a good solution, then we may have
$F(n,\kb'_1,\ldots,\kb'_{m-1})>\kb'_m$, in which case we will simply
return an empty set. This can happen when Algorithm
\ref{alg:multi_knap} is executed with an incorrect value of $\opt$.)
We have $w_m(\calT)=w'_m(\calT)\leq \kb'_m=\kb_m$. For each $1\leq
j\leq m-1$, $w'_j(v)=\lfloor w_j(v)/A \rfloor>w_j(v)/A-1$, and thus
we have
\begin{eqnarray*}
\sum_{v\in\calT}w_j(v) &\leq& \sum_{v\in\calT}(A \cdot w'_j(v) + A) = A \cdot \sum_{v\in\calT}w'_j(v)+nA\\
&\leq& A \cdot \kb'_j + n\cdot \epsilon\cdot w_{\max}/n\\
&\leq& \kb_j + \epsilon \cdot w_{\max}\\
&\leq& (1+\epsilon)\kb_j ~~\textrm{(since~}w_{\max} \leq \kb_j).
\end{eqnarray*}
Therefore, $\calT$ is a solution of $\calI$ that satisfies one of
the constraints and violates the others by at most a factor of
$1+\epsilon$. (It is easy to see that, by modifying the definitions
of $\{w'_j\}$ and $\{\kb'_j\}$, we can make any one of the
constraints to be the satisfied one.) The proof of Lemma
\ref{lem:group} is thus complete.

\end{document}